\pdfoutput=1
\documentclass[runningheads]{llncs}
\usepackage[hidelinks]{hyperref}
\hypersetup{
  pdftitle={Stochastic Languages at Sub-stochastic Cost},
  pdfauthor={Smayan Agarwal, Aalok Thakkar},
  pdfkeywords={quantitative automata, cost register automata, stochastic languages, spectral methods},
  pdfsubject={Preprint version},
  pdfcreator={LaTeX on Overleaf},
  pdfproducer={Overleaf, pdfTeX},
  colorlinks=true,
  linkcolor=black,
  citecolor=black,
  urlcolor=black
}

\usepackage[T1]{fontenc}
\usepackage{amsmath}
\usepackage{amssymb}
\usepackage{stmaryrd}
\usepackage[inline]{enumitem}
\usepackage{tikz}
\usetikzlibrary{automata, positioning, shapes.multipart,arrows.meta}
\usepackage{float}
%
\usepackage{graphicx}
%
%

\newcommand{\R}{\mathbb R}
\newcommand{\Rp}{\mathbb{R}_{\geq 0}}

\newcommand{\semantics}[1]{{\llbracket #1 \rrbracket}}
\newcommand{\calA}{\mathcal{A}}

\newcommand{\mass}[2]{\mathrm{mass}\left(#2\right)_{#1}}

\newcommand{\Stoch}{\mathcal{S}}
\newcommand{\Srat}{\mathcal{S}_{\Rp}^{rat}(\Sigma^*)}

\begin{document}

\title{Stochastic Languages at Sub-stochastic Cost}

\author{Smayan Agarwal, Aalok Thakkar}

\authorrunning{Agarwal \& Thakkar}
\institute{Ashoka University}
\maketitle

\begin{abstract}
When does a deterministic computational model define a probability distribution? What are its properties? This work formalises and settles this stochasticity problem for weighted automata, and its generalisation cost register automata (CRA).

We show that checking stochasticity is undecidable for CRAs in general. This motivates the study of the fully linear fragment, where a complete and tractable theory is established. For this class, stochasticity becomes decidable in polynomial time via spectral methods, and every stochastic linear CRA admits an equivalent model with locally sub-stochastic update functions. This provides a local syntactic characterisation of the semantics of the quantitative model.

This local characterisation allows us to provide an algebraic Kleene-Schützenberger characterisation for stochastic languages. The class of rational stochastic languages is the smallest class containing finite support distributions, which is closed under convex combination, Cauchy product, and discounted Kleene star. We also introduce Stochastic Regular Expressions as a complete and composable grammar for this class.

Our framework provides the foundations for a formal theory of probabilistic computation, with immediate consequences for approximation, sampling, and distribution testing.
\keywords{Quantitative Automata  \and Cost Register Automata \and Stochastic Languages \and Spectral Methods}
\end{abstract}

\section{Introduction}
\label{sec:intro}

A fundamental question in formal language theory asks: \emph{Which functions on strings give rise to probability distributions?} More precisely, given a quantitative automaton that assigns real-valued weights to words through algebraic computations, when does this function, after normalization, define a valid probability measure over $\Sigma^*$? This stochasticity problem sits at the confluence of automata theory, linear algebra, and probability theory, connecting classical questions about regularity with modern challenges in probabilistic modelling.

The question is deceptively simple. Unlike classical probabilistic automata~\cite{rabin1963probabilistic,paz1971probabilistic}, which embed randomness directly into their operational semantics, weighted automata~\cite{droste2007weighted} and cost register automata (CRAs)~\cite{Alur2013} compute weights by accumulating numeric values through matrix multiplication or register updates, producing a function $f:\Sigma^*\to \mathbb{R}$ without internal probabilistic choices. The stochasticity question asks whether this deterministically computed function satisfies two global conditions: non-negativity ($f(w) \geq 0$ for all $w$) and normalization ($\sum_{w \in \Sigma^*} f(w) = 1$). Answering this requires reasoning about infinite summations over all possible strings--a task that, as we show, is undecidable in general.

\paragraph{Motivation.} 
Probability distributions over strings pervade computer science. In natural language processing, stochastic grammars model syntactic structure~\cite{chomsky1956,jelinek1998}, while neural language models implicitly define distributions over token sequences~\cite{brown1993}. Biological sequence analysis relies on probabilistic models of DNA and protein strings, where hidden Markov models capture evolutionary dynamics~\cite{durbin1998}. Formal verification employs probabilistic specifications to reason about quantitative properties~\cite{baier2008,uppaal2004}. Information theory studies optimal compression via string distributions~\cite{cover2006}.
Quantitative automata theory offers formal models for their study. When they define stochastic languages, they provide interpretable, verifiable probability models with decidable properties and compositional structure. Understanding \emph{when} they are stochastic, and \emph{why}, is thus both theoretically natural and practically consequential.

\paragraph{Results and Contributions.}
Our focus on deterministically computed distributions provides a unifying framework for these domains. We study cost register automata as a model of computing probability distributions over strings. Our main contributions are:

\begin{enumerate}
    \item We prove that determining whether a CRA defines a stochastic language is undecidable in general (Theorems~\ref{thm:cra-undecidability} and~\ref{thm:poly-cra-undecidability}). To recover decidability, we identify and analyse the fragment of CRAs with affine updates. We establish that stochasticity decision problem decidable for this fragment (Theorem~\ref{thm:affine-decidable}). 
    \item Our second main contribution proves that every linear CRA defining a stochastic language admits a semantically equivalent representation in which stochasticity becomes local and can be checked by computing the sum of transition weights of each register (Theorem~\ref{thm:local-substochastic}). 
    \item We provide a Kleene-Schützenberger characterization for rational stochastic languages (Theorem~\ref{thm:kleene-characterisation}), and introduce stochastic regular expressions as a grammar for rational stochastic languages.
\end{enumerate}

Our results chart a precise boundary between decidable and undecidable semantic properties of quantitative automata and provide sub-stochastic and local characterisation for CRA with linear updates. 
We also present three open questions on 
\emph{stochasticity decision problem} for general quantitative models, \emph{minimal approximation methods}, and \emph{distribution testing} for stochastic languages. 
These results and problems bridge automata theory and statistical inference, 
opening new directions for both theoretical investigation, modelling of stochastic sequences, and validating probabilistic generative models against formal distributional specifications.

\section{Related Work}
\label{sec:related-work}

Stochastic languages formalize probability distributions over words, with canonical representations via weighted automata~\cite{droste2007weighted} and probabilistic grammars~\cite{Denis2004}. These frameworks connect formal language theory with probabilistic modeling, supporting applications in grammar induction~\cite{Clark2010}, and natural language processing~\cite{Cohn2009}. Despite this extensive development, there is no established syntactic characterization that delineates exactly the class of stochastic languages definable by such models.

Early probabilistic automata, introduced by Rabin and Paz, localize randomness in the transition relation: outgoing transitions from each state form a stochastic distribution, and the probability of accepting a word is obtained by summing over all accepting runs labeled by that word~\cite{rabin1963probabilistic,paz1971probabilistic}. This yields a probabilistic acceptance semantics governed by cutpoint conditions. For $\omega$-languages, probabilistic Büchi automata assign Markov chains to runs and classify behaviors by almost-sure, positive, or threshold probability acceptance~\cite{baier2012probabilistic}. Chatterjee, Doyen, and Henzinger later generalized these models to probabilistic weighted automata, where transitions are probabilistic but evaluations are quantitative, defining random variables over runs rather than word-level distributions~\cite{chatterjee2009probabilistic,Bollig2012ProbabilisticKleene}. In concurrency theory, Segala and Lynch formalized probabilistic automata as systems with randomized transitions, introducing semantic notions of probabilistic bisimulation~\cite{segala1995probabilistic}. Across these variants, randomness arises operationally along runs rather than as a distribution over the word space.

On the other hand, Cost Register Automata (CRA), introduced by Alur et al.~\cite{Alur2013}, provide a deterministic, register-based model for quantitative functions in which a finite set of registers is updated along a run using algebraic operations. The model generalizes classical weighted automata, while retaining explicit operational semantics. More recently, Benalioua, Lhote, and Reynier analyzed CRA with linear and affine updates over fields, demonstrating equivalence in expressive power with weighted automata and developing minimization algorithms with matching complexity bounds~\cite{benalioua2024minimizing}. Together, these lines of work position CRA as a deterministic and compositional alternative to path-summing weighted models, offering a good fit for extending quantitative automata theory to probabilistic or normalized semantics.

A distinct line of work, initiated by Denis and Esposito~\cite{Denis2009LearningWA,Denis2004}, studied rational stochastic languages through multiplicity automata. While such models provide an algebraic vehicle for representing stochastic languages, the literature still lacks a clear syntactic or compositional characterization analogous to the Kleene–Schützenberger theorem for rational stochastic languages. In particular, it remains unclear whether stochastic languages admit a closed algebra of regular-like expressions corresponding precisely to the distributions generated by these automata. 

Our work addresses this gap by developing a cost register automata based framework for stochastic languages, proving a locally sub-stochastic characterisation, and introducing  Stochastic Regular Expressions that give an equivalence between algebraic and automata-theoretic descriptions of stochastic languages.

\section{Preliminaries and Notation}
\label{sec:notation}

In this section we recall the basic definitions and notation used throughout the paper.
Our focus is on \emph{quantitative languages} over non-negative reals,
their representation by \emph{weighted} and \emph{cost register automata}, 
and a few linear-algebraic tools used to reason about convergence.

\subsection{Quantitative Languages}
\label{sec:notation:language}

Let $\Sigma$ be a finite alphabet and $\Sigma^*$ the set of all finite words over $\Sigma$. For the purpose of our discussion, a \emph{quantitative language} over $\Sigma$ is a function $
f: \Sigma^* \to \R$, that assigns a real-valued weight to each word.
For a general treatment of quantitative languages, see \cite{droste2007weighted,Handbook-weighted-automata}.

\begin{definition}[Mass]
Let $L \subseteq \Sigma^*$ be a (possibly finite or infinite) set of words. The mass of a quantitative language $f: \Sigma^* \to \R$ on $L$ is defined as
\[
\mass{L}{f} = \sum_{w \in L} f(w),
\]
whenever this sum is well-defined. In particular, the total mass of $f$ is $\mass{\Sigma^*}{f}$.
\end{definition}

A quantitative language $f$ is {\emph finite-mass} if $
\mathrm{mass}_{\Sigma^*}$ is finite. \cite{Handbook-weighted-automata}

\begin{example}
\label{example-algebraic}
Let $\Sigma = \{a, b\}$. Let $f: \Sigma^* \to \R$ be defined as:
\[
f(w) = \frac{1}{(|w|+1)^2 \cdot 2^{|w|}},
\]
where $|w|$ denotes the length of $w$ (so that $|\varepsilon| = 0$). Then $f$ is a finite-mass language as $\mass{\Sigma^*}{f}= \frac{\pi^2}{6}$.
\end{example}

\begin{definition}[Stochastic Language] A quantitative language $f: \Sigma^* \to \R$ is stochastic if for all $w \in \Sigma^*$, $f(w) \geq 0$ and  $\mass{\Sigma^*}{f} = 1$. The set $\Stoch(\Sigma^*)$ denotes the set of all stochastic languages over $\Sigma^*$ \cite{Denis2004,Handbook-weighted-automata}.
\end{definition}

\begin{example}[Dirac Distribution]
\label{ex:dirac}
For a fixed string $w \in \Sigma^*$, the \emph{Dirac distribution} $\delta_w$ is a stochastic language over $\Sigma^*$ defined for all $u \in \Sigma^*$ as:
\[
\delta_w(u) = 
\begin{cases} 
1 & \text{if } u = w, \\
0 & \text{otherwise}.
\end{cases}
\]
\end{example}

\begin{example}
Consider the function $f$ in Example~\ref{example-algebraic}. The function $\overline{f}: w \mapsto \frac{6f(w)}{\pi^2}$ is a stochastic language. In general, if $f$ is a finite-mass language such that for all $w \in \Sigma^*$, $f(w) \geq 0$, and $\mass{\Sigma^*}{f} \neq 0$, the \emph{normalized} function 
$\overline{f}: w \mapsto f(w)/\mass{\Sigma^*}{f}$ is a stochastic language.
\end{example}

\begin{definition}
\label{def:operators}
We can now define some operations over stochastic languages. Let $f_1$ and $f_2$ be two stochastic languages, and let $\lambda \in (0, 1)$.
\begin{enumerate}
    \item Their convex combination parametrised by $\lambda$ is:
\[
\lambda f_1 + (1-\lambda)f_2: w \mapsto \lambda f_1(w) + (1-\lambda)f_2(w)
\]
\item Their Cauchy product is:
\[
f_1\cdot f_2: w \mapsto \sum_{w = uv}f_1(u) \cdot f_2(v)
\]
\item The {\emph discounted Kleene Star} of $f_1$ parameterised by $\lambda$ is: 
\[
(f_1)^*_\alpha: w \mapsto \sum_{k=1}^\infty \sum_{\substack{w_1,\ldots,w_k \in \Sigma^+ \\ w = w_1 \cdots w_k}} \alpha (1-\alpha)^{k-1} \prod_{i=1}^k f_1(w_i).
\]
\end{enumerate}
\end{definition}

Intuitively, \( f^*_\alpha(w) \) defines a distribution over strings obtained by concatenating \( k \) non-empty substrings, each independently drawn from \( r \), with the total number of substrings following a shifted geometric distribution with parameter \( \alpha \). This definition implicitly sets $r(\epsilon)=0$. This is in accordance with standard definitions \cite{bollig2015weighted}. It is immediate that stochastic languages are closed under the above three operators. 

\subsection{Quantitative Automata}

Weighted finite automata (WFAs) are the canonical model for quantitative languages
over semirings~\cite{bollig2015weighted,Handbook-weighted-automata,Balle2015LearningWA}. 
A {\em semiring} is a set $K$ with two binary operations $+$ and $\cdot$ and two constant elements $0$ and $1$ such that $(K, +, 0)$ is a commutative monoid, $(K,\cdot, 1)$ is a monoid, and $+$ distributes over $\cdot$ operation. We will work with the semirings $(\R, +, \cdot, 0, 1)$ and $(\Rp, +, \cdot, 0, 1)$.

\begin{definition}[Weighted Finite Automata (WFA)]
A weighted automaton $\mathcal{A}$ is a tuple $(\Sigma,Q,\lambda,M,\mu)$ over the semiring $K$ where:
\begin{enumerate}[label=\arabic*)]
\item $\Sigma$ is a finite alphabet,
  \item $Q$ is a finite state set,
  \item $\lambda:Q\to K$ is the initial weight vector,
  \item $M:\Sigma\to K^{Q\times Q}$ assigns a transition matrix $M_\sigma$ to each symbol,
  \item $\mu:Q\to K$ is the final weight vector.
\end{enumerate}
The semantics of $\mathcal{A}$ is the quantitative language
\[
\llbracket \mathcal{A} \rrbracket(w)
= \lambda^\top M_{w_1}\cdots M_{w_n}\mu
\quad \text{for } w=w_1\cdots w_n\in\Sigma^*.
\]
\end{definition}

The set $\Stoch^{rat}_K(\Sigma^*)$ represents the set of stochastic languages over $\Sigma^*$ that can be expressed by a weighted automaton over the semiring $K$. By restricting to the non-negative models in $\Srat$, we give up a bit of expressivity. Indeed, $\Srat \subsetneq \Stoch_{\R}^{rat}(\Sigma) \subsetneq \Stoch(\Sigma)$ \cite{Denis2004}.

Cost Register Automata (CRAs), introduced by Alur et al.~\cite{Alur2013}, offer
an alternative representation in which weights are stored explicitly in
registers and updated deterministically during the run.

\begin{definition}[Cost Register Automaton]
A real-valued CRA is a tuple
$\mathcal{A} = (\Sigma,Q,X,q_0,x_0,\delta,f,\mu)$ where
\begin{enumerate}[label=\arabic*)]
  \item $\Sigma$ is a finite alphabet, and $Q$ a finite state set;
  \item $X$ is a finite set of registers;
  \item $q_0\in Q$ is the initial state, and $x_0:X\to\mathbb{R}$ the initial valuation;
  \item $\delta:Q\times\Sigma\to Q$ is a deterministic transition function;
  \item $f:Q\times\Sigma\times\mathbb{R}^X\to\mathbb{R}^X$ specifies register updates;
  \item $\mu:Q\times\mathbb{R}^X\to\mathbb{R}$ is the output function.
\end{enumerate}
The semantics $\semantics{\mathcal{A}}:\Sigma^*\to\mathbb{R}$ is given by
\[
\semantics{\mathcal{A}}(w) = \mu(q_n,x_n),
\]
where $(q_0,x_0)\xrightarrow{w_1}(q_1,x_1)\xrightarrow{w_2}\cdots\xrightarrow{w_n}(q_n,x_n)$
is the unique run with $x_i=f(q_{i-1},w_i,x_{i-1})$.
\end{definition}

CRAs naturally capture deterministic weighted computations and are expressive enough to represent rational stochastic languages when instantiated with affine updates. We will use the cost register automata in Figure~\ref{fig:running-example} as our running example. 
\begin{figure}[t]
\centering
\begin{tikzpicture}[->,auto,accepting/.style=accepting by arrow]
    \node[state,initial,initial text = {$X:=1 \;\;Y:=0$}] (q1) {$q_1$};
    \node[state,
          accepting,%
          accepting text={$X$}] (q2)[right=5cm of q1] {$q_2$};
    \path (q1) edge [loop above]
                node [align=center] {$a \big/ X :=  Y$ \\ $a \big/ Y := \frac{1}{12}X+\frac{1}{4}Y$} (q1);
    \path (q1) edge [loop below]
                node [align=center] {$b \big/ X :=  0$ \\ $b \big/ Y := \frac{1}{12}X$} (q1);  
    \path (q2) edge [loop above]
                node [align=center] {$a \big/ X :=  \frac{1}{3}X$ \\ $a \big/ Y := \frac{1}{6} Y+ \frac{1}{3}$} (q2);
    \path (q2) edge [loop below]
                node [align=center] {$b \big/ X :=  \frac{1}{6}X$ \\ $b \big/ Y := \frac{1}{3}Y  + \frac{1}{6}$} (q2); 
    \path (q2)  edge [bend right=10]  
                node[above] {$c \Big/ X := 0\;\; Y:=\frac{1}{12}X$} (q1);
    \path (q1)  edge [bend right=10]
                node[below] {$c \Big/ X := X\;\; Y:=2Y + \frac{1}{2}$} (q2);                    
\end{tikzpicture}
\caption{A two-state cost register automaton (CRA) over alphabet $\{a,b,c\}$ with registers $X$ and $Y$. Transitions are labeled by input symbols and associated affine register updates. The initial valuation is $X=1, Y=0$, and the output in the accepting state $q_2$ is the value of register $X$.}
\label{fig:running-example}
\end{figure}
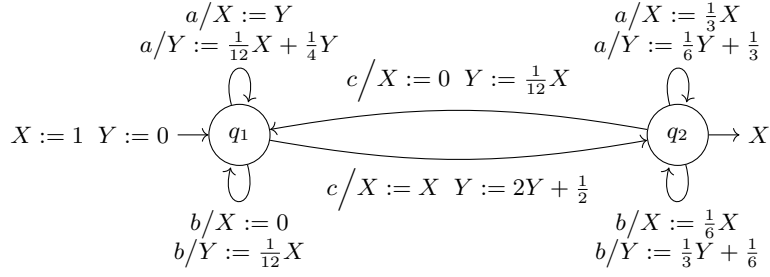
\subsection{Matrix Properties and Convergence}
\label{sec:notation:matrix}

We recall standard results from non-negative matrix theory
(see~\cite{HornJohnson2012}) that will be used to analyse convergence of weighted computations. The results discussed here are for real valued square matrices in $\R^{n\times n}$.

\begin{enumerate}
\item A non-negative square matrix $A$ is
\emph{irreducible} if for all $i,j$ there exists $k\ge1$ such that
$(A^k)_{ij}>0$.
\item The \emph{spectral radius} $\rho(\cdot)$ of a square matrix is the maximum of the absolute values of its eigenvalues. The spectral radius of a bounded operator is the supremum of the absolute values of the elements of its spectrum. 
\item A non-negative square matrix A is \emph{row sub-stochastic} if for every row $i$, $\sum_j A_{ij} \leq 1$. It is strictly sub-stochastic if for at least one row, the sum is stictly less than $1$.
\item (Perron-Frobenius Theorem) If a non negative matrix $A$ is irreducible, then:
\[
\min_i \sum_j A_{ij} \le \rho(A) \le \max_i \sum_j A_{ij}.
\]
In particular, if it is strictly sub-stochastic, then $\rho(A)<1$.
\item (Neumann Series) For a square matrix $A$, the series $I + A + A^2 + A^3 \ldots $ converges if and only if $\rho(A) < 1$. In this case, it converges to $(I - A)^{-1}$.
\end{enumerate}

$\rho(A)$ or $(I-A)^{-1}$ can be computed in $O(n^3)$ time.

\section{Problem Formulation and Decidability Results}
\label{sec:problem}

This work is motivated by a central semantic question: 
\begin{quote}
\emph{Given a cost register automaton (CRA), when does it define a stochastic language?}
\end{quote}

This problem connects the operational view of quantitative automata with the analytic notion of probability distributions over strings.

\begin{problem}[Stochasticity Decision Problem]
\label{prob:stochasticity}
Given a CRA $\mathcal{A}$, determine whether its semantics satisfies:
\begin{enumerate}[label=(\roman*)]
  \item Non-negativity: $\semantics{\mathcal{A}}(w)\ge0$ for all $w\in\Sigma^*$, and
  \item Normalization: the total mass equals one,
  \[
    \mass{\Sigma^*}{\semantics{\mathcal{A}}}
    = \sum_{w\in\Sigma^*}\semantics{\mathcal{A}}(w)
    = 1.
  \]
\end{enumerate}
\end{problem}

Intuitively, a CRA represents a stochastic language when its register updates behave like probability-preserving transitions and the total accumulated weight over all strings equals~1.  
In what follows, we chart the boundary between decidable and undecidable instances of this problem.

\subsection{Undecidability of the Stochasitcity Decision Problem}

We first consider the more basic question of whether a given CRA has \emph{finite total mass}.  
Even this relaxed problem turns out to be undecidable.

\begin{theorem}\label{thm:cra-undecidability}
Given a CRA $\mathcal{A}$ with affine register updates,
it is undecidable whether $\mass{\Sigma^*}{\semantics{\calA}}$ converges. 
\end{theorem}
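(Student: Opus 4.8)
The plan is to reduce from a known undecidable problem about linear recurrences or matrix products — most naturally, the reachability/zero problem for weighted automata or the halting problem for two-counter (Minsky) machines — and encode it so that convergence of the total mass $\sum_{w \in \Sigma^*} \semantics{\calA}(w)$ is equivalent to a "no"-instance (or "yes"-instance) of that problem. Concretely, I would build a CRA whose affine updates simulate the configuration of a two-counter machine $M$ on empty input: registers hold (encodings of) the two counter values, and the state component of the CRA tracks the finite control. The key design goal is that the output weight $\semantics{\calA}(w)$ should be $0$ for all $w$ except those that faithfully encode a halting computation of $M$, and on such $w$ it should contribute a weight that makes the sum diverge (e.g. output $1$, or output a geometrically growing quantity) precisely when $M$ does \emph{not} halt, so that finiteness of the mass decides non-halting.

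Since affine updates over $\R$ can only do linear arithmetic, the main technical work is arranging the encoding so that counter increments/decrements and zero-tests are captured by affine register updates together with the finite state control and the padding structure of the input word. First I would fix an input alphabet large enough to timestamp each computation step, and use one letter to denote "advance the simulation by one step"; then the transition function $\delta$ and update function $f$ are set so that after reading the word $w = \sigma^n$ the registers hold the configuration of $M$ after $n$ steps (counters encoded, say, as the exponents in $2^{c_1} 3^{c_2}$ is not affine, so instead I would keep $c_1, c_2$ directly in two registers and simulate increment by $x := x+1$, decrement by $x := x-1$, and implement zero-testing via the finite control by branching on a dedicated "guess zero" letter whose correctness is enforced by a sink-to-zero update if the guess is wrong). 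The output function $\mu$ returns a nonzero value only in the designated halting state; everything else outputs $0$. One then argues: if $M$ halts, only finitely many words get nonzero weight, so the mass is a finite sum and converges; if $M$ does not halt, one arranges that infinitely many words (the correct prefixes, suitably re-weighted) carry a fixed positive weight bounded below, so the mass diverges — or, dually, one tunes discount factors so the halting case gives an infinite geometric tail.

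The delicate point — and the main obstacle — is faithfulness of the simulation under the \emph{deterministic} CRA model: because $\delta$ is deterministic, illegal moves (wrong zero-guesses, ill-formed timestamps) cannot simply be "rejected" by having no transition; they must be routed to a state whose updates zero out the output (or drive it to a value that contributes $0$ to the sum), so that only genuine computations of $M$ contribute. Getting this "poisoning" to be both affine and complete (covering every malformed input) without accidentally contributing spurious mass is the crux. A secondary subtlety is ensuring non-negativity is not an issue for the reduction — but since Theorem~\ref{thm:cra-undecidability} concerns only convergence of the mass, not stochasticity, I can allow outputs to be nonnegative by construction (all updates keep registers nonnegative, and $\mu$ outputs a register value or $0$), which sidesteps sign bookkeeping entirely. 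Once the encoding is in place, the equivalence "mass converges $\iff$ $M$ halts" (or its negation) follows by a routine case analysis on whether the halting computation exists, yielding undecidability by reduction from the halting problem.
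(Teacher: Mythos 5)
Your reduction has a genuine gap at exactly the point you yourself flag as ``the crux,'' and it cannot be waved past. In the CRA model the transition function $\delta:Q\times\Sigma\to Q$ depends only on the current state and the input symbol, never on register contents, so the finite control cannot observe whether a counter register is zero; and an affine update $x\mapsto Ax+b$ cannot conditionally ``sink to zero'' depending on whether another register is zero or positive (that would require a product of two registers or a case split, neither of which is affine). So the poisoning of runs with wrong zero-guesses can live neither in the control nor in the affine updates, and the proposal does not say where else it could live. A second, more minor, problem is the direction of the equivalence: with your stated output function (nonzero only in the halting state on a faithful encoding), a non-halting machine yields \emph{no} words of nonzero weight, so the mass trivially converges in the non-halting case as well, and the sketch as written does not produce divergence on either side of the reduction.

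The paper's proof avoids both issues by reducing from PCP rather than from counter machines. The register updates are genuinely affine---four registers accumulate the base-$10$ numeric value and positional multiplier of $f(w)$ and $g(w)$---and the only ``verification'' needed is a single equality test $X_f=X_g$, which is placed in the \emph{output} function $\mu$; the theorem restricts only the register updates to be affine, not $\mu$. Divergence is then arranged in the solvable case: if $w$ is a PCP solution then $\semantics{\calA}(w^k)=1$ for every $k\ge 1$, so the mass diverges, while if there is no solution the mass equals $1$. If you want to rescue a counter-machine reduction, the natural move is the same one: accumulate zero-test discrepancies into an error register by affine updates (e.g.\ $e:=e+c_1$ at each zero-guess, which stays $0$ iff every zero-guess was correct, since counters are nonnegative) and let a non-affine output test $e=0$; but certifying the \emph{nonzero}-guesses affinely remains awkward, which is why PCP is the cleaner source problem here.
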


\begin{proof}

We reduce from the \emph{Post Correspondence Problem (PCP)}.
Let the instance be given by pairs of strings $\{(f(i),g(i))\}_{i=1}^k$
over the digit alphabet $\Sigma=\{1,\dots,9\}$.
We construct a single-state CRA $\mathcal{A}$ (Figure~\ref{fig:cra-undecidability})
with registers $X_f,Y_f,X_g,Y_g$ initialized as $(0,1,0,1)$.
Upon reading symbol $\sigma\in\{1,\dots,k\}$, the updates are:
\[
\begin{aligned}
X_f &:= X_f + Y_f\cdot \mathrm{val}(f(\sigma)), &
Y_f &:= 10^{|f(\sigma)|}\cdot Y_f,\\
X_g &:= X_g + Y_g\cdot \mathrm{val}(g(\sigma)), &
Y_g &:= 10^{|g(\sigma)|}\cdot Y_g.
\end{aligned}
\]
The output $\mu(q,(X_f,Y_f,X_g,Y_g))=1$ iff $X_f=X_g$, and $0$ otherwise.

For a word $w=\sigma_1\cdots\sigma_n$, the registers $(X_f,Y_f)$ encode
the base-10 numeric value and positional multiplier for $f(w)$, and analogously for $g(w)$.
Hence, $\semantics{\mathcal{A}}(w)=1$ iff $f(w)=g(w)$.

If the PCP instance admits a solution $w$, then $\semantics{\mathcal{A}}(w^k)=1$ for all $k\ge1$, so
\[
  \mass{\Sigma^*}{\semantics{\mathcal{A}}}
  \ge \sum_{k=1}^\infty \semantics{\mathcal{A}}(w^k)
  = \infty.
\]
If no solution exists, $\semantics{\mathcal{A}}(w)=0$ for all nonempty $w$,
and $\semantics{\mathcal{A}}(\varepsilon)=1$, giving $\mass{\Sigma^*}{\semantics{\mathcal{A}}}=1$.
Thus, deciding convergence of $\mass{\Sigma^*}{\semantics{\mathcal{A}}}$ is equivalent to solving PCP. \qed
\end{proof}

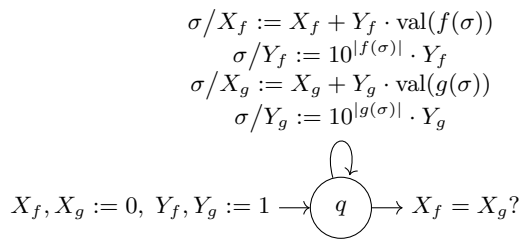
\begin{figure}[t]
\centering
\begin{tikzpicture}[->,auto,accepting/.style=accepting by arrow]
    \node[state,initial, accepting, 
          initial text={$X_f, X_g:=0,\; Y_f, Y_g:=1$},
          accepting text={$X_f = X_g?$}] 
          (q) {$q$};

    \path (q) edge [loop above] node[align=center] {
        $\sigma \big/ X_f := X_f + Y_f \cdot \mathrm{val}(f(\sigma))$ \\
        $\sigma \big/ Y_f := 10^{|f(\sigma)|} \cdot Y_f$ \\
        $\sigma \big/ X_g := X_g + Y_g \cdot \mathrm{val}(g(\sigma))$ \\
        $\sigma \big/ Y_g := 10^{|g(\sigma)|} \cdot Y_g$
    } (q);
\end{tikzpicture}
\caption{A single-state CRA used in the reduction from PCP in the proof of Theorem~\ref{thm:cra-undecidability}. The registers $X_f$ and $X_g$ accumulate the numerical value of interpretations of two transductions $f$ and $g$ over input $w$, and $Y_f$ and $Y_g$ maintain their lengths. The output function checks if $X_f = X_g$ and returns $1$, or returns $0$ otherwise.}
\label{fig:cra-undecidability}
\end{figure}

The proof of Theorem~\ref{thm:cra-undecidability} crucially relies on a non-polynomial output function (testing equality). This naturally leads us to consider the \emph{fully polynomial fragment}, where both the register updates and output functions are restricted to polynomials. We show a general property of polynomial dynamical systems:

\begin{theorem}
\label{thm:poly-cra-undecidability}
Let $p : \mathbb{R}^d \to \mathbb{R}^d$ be a polynomial map, $g : \mathbb{R}^d \to \mathbb{R}$ be a polynomial function, and $x_0 \in \mathbb{R}^d$ be an initial vector. It is undecidable to determine if the following series converges:

\begin{equation}
\label{eq:poly-series}
S(p, g, x_0) = \sum_{n=0}^{\infty} g(p^{(n)}(x_0)).
\end{equation}
\end{theorem}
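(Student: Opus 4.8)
Theorem~\ref{thm:cra-undecidability} extracted its undecidability from a non-polynomial (equality-testing) output, so here I would instead extract it from \emph{Hilbert's Tenth Problem}: by the Matiyasevich--Robinson--Davis--Putnam theorem it is undecidable, given $Q\in\mathbb{Z}[z_1,\dots,z_m]$, whether $Q$ has a zero in $\mathbb{N}^m$. Given $Q$, the plan is to build a polynomial system $(p,g,x_0)$ on $\mathbb{R}^d$ (for a suitable $d$) whose orbit runs an exhaustive search over $\mathbb{N}^m$: a block of coordinates $\mathbf{z}$ serves as a \emph{cursor} that, driven by $p$, eventually visits every tuple of $\mathbb{N}^m$; a coordinate $\Pi$, started at $1$, is replaced by $\Pi\cdot Q(\mathbf{z})^2$ each time the cursor reaches a fresh tuple; and $g$ just reads off $\Pi$. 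Then $S(p,g,x_0)$ is the sum, over the steps of the search, of the running products $\prod_i Q(\mathbf{z}^{(i)})^2$ --- which is precisely the mass of a unary polynomial CRA, hence the phrasing at the level of dynamical systems.

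\textbf{Correctness.} Squaring, together with the fact that $Q$ is integer-valued on $\mathbb{N}^m$, forces a clean dichotomy. If $Q(\mathbf{z}^\star)=0$ for some $\mathbf{z}^\star\in\mathbb{N}^m$, then once the cursor reaches $\mathbf{z}^\star$ the factor $Q(\mathbf{z}^\star)^2=0$ drives $\Pi$ --- hence every later term $g(p^{(n)}(x_0))$ --- to $0$, so $S$ is a finite sum and converges. If $Q$ has no zero on $\mathbb{N}^m$, then $Q(\mathbf{z})^2\ge 1$ at every cursor position, so $\Pi\ge 1$ always and $S\ge\sum_n 1=\infty$. Thus $S$ converges if and only if $Q$ has no zero in $\mathbb{N}^m$, which is undecidable. (A variant that exposes the ``sub-stochastic'' flavour: let a coordinate $s$ obey $s\mapsto s/2$ as long as no zero has been seen and freeze it afterwards, and take $g=s$; then $S$ converges --- to a value at most $2$ --- iff no zero exists.)

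\textbf{Main obstacle.} Everything above is routine once the cursor exists, so the real content --- and the step I expect to be hard --- is enumerating $\mathbb{N}^m$ by a \emph{polynomial} update map. The difficulty is intrinsic: a polynomial map cannot test a real coordinate for membership in an infinite set (a polynomial taking finitely many values on an infinite set is constant), so a naive mixed-radix odometer fails, because its loop-exit test ``this index has reached the current bound?'' compares two \emph{unbounded} coordinates. The main lever is to push all branching onto \emph{bounded} data --- a phase coordinate ranging over a fixed finite set, on which every case split is an \emph{exact} Lagrange interpolant --- while the truly unbounded data (the growing bound, the running indices, $\Pi$) evolve only through $+$, $\times$ and affine shifts, and integrality is used so that the quantities one must keep apart differ by at least $1$; the sweep over $\{0,\dots,N\}^m$ for $N=0,1,2,\dots$ then carries its counters in real coordinates and its control in the bounded phase, advancing $\Pi$ exactly on the steps that emit a fresh tuple. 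Checking that the growing loop bounds can nonetheless be managed with polynomial control is the delicate point where a clever encoding is needed; alternatively one sidesteps it by reducing instead from the known undecidability of point-to-point reachability for iterated polynomial maps, for which it suffices to attach the decaying coordinate $s$ and freeze it when the target control state is hit. The remainder is bookkeeping.
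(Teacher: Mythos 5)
Your reduction target (Hilbert's Tenth Problem) and the dichotomy you set up via $\Pi \mapsto \Pi\cdot Q(\mathbf{z})^2$ would work in principle, but the proof has a genuine unfilled gap exactly where you locate it: the ``cursor'' that enumerates $\mathbb{N}^m$ under a polynomial update map. The techniques you sketch do not close it. Every carry or loop-exit decision of a mixed-radix or diagonal sweep is a comparison between two \emph{unbounded} integer coordinates (``has $z_i$ reached the current bound $N$?''), and no polynomial $h(z,N)$ can equal $1$ on the infinite set $\{z=N\}$ while vanishing on the Zariski-dense set $\{z\neq N\}$. Lagrange interpolation on a bounded phase coordinate only resolves case splits over a \emph{fixed finite} set and cannot absorb these unbounded comparisons, so the delicate encoding you defer is not bookkeeping --- it is the whole theorem. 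Your fallback (reduce from reachability for iterated polynomial maps and ``freeze $s$ when the target is hit'') has the same problem in miniature: freezing is itself a detection step requiring the update of $s$ to branch on an equality test against the target, and plain point-to-point reachability does not even give the needed dichotomy, since the orbit may approach the target without hitting it, or hit it once without remaining there.

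The paper's proof sidesteps all of this by citing off-the-shelf \emph{robust} Turing-machine simulations by polynomial maps (Bournez et al.; Hainry): if $M(w)$ halts, the orbit is eventually fixed at a rational point $q$, and if it does not halt, the orbit stays at distance at least $1$ from $q$ forever. Taking $g(x)=\|x-q\|^2$ then makes the series converge iff $M(w)$ halts, with no detection, freezing, or enumeration gadget at all --- the separation built into the simulation replaces the branching you are trying to synthesize. The honest fix for your route is to import such a simulation result rather than construct the cursor by hand; but at that point the detour through Hilbert's Tenth Problem is unnecessary and you land on essentially the paper's argument.
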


\begin{proof}
We reduce from the halting problem for Turing machines. By classical constructions \cite{Bournez2007,Hainry2006}, for every Turing machine $M$ and input $w$, there exist $d$, a polynomial map $p:\mathbb{R}^d\to\mathbb{R}^d$,
and rational vectors $x_0,q\in\mathbb{Q}^d$ such that
the iteration $x_{n+1}=p(x_n)$ simulates the configuration of $M(w)$:
\begin{enumerate}
  \item if $M(w)$ halts after $N$ steps, then $x_n = q$ for all $n \ge N$, and
  \item if $M(w)$ does not halt, then $\|x_n - q\| \ge 1$ for all $n$.
\end{enumerate}

Define the function $g : \mathbb{R}^d \to \mathbb{R}$ by $
g(x) = \|x - q\|^2 = \sum_{i=1}^d (x_i - q_i)^2$. Then $g(x_n)=0$ eventually iff $M(w)$ halts. That is, 
\[
M(w) \text{ halts } \iff \sum_{n=0}^{\infty} g(p^{(n)}(x_0)) \text{ converges.}
\]
Thus, deciding convergence of Equation~\ref{eq:poly-series} is undecidable. \qed
\end{proof}

These results imply that checking whether a CRA has finite total mass is undecidable, both for CRAs restricted to \emph{affine} register updates with \emph{non-linear} outputs (Theorem~\ref{thm:cra-undecidability}), 
and for the \emph{fully polynomial} fragment with polynomial updates and outputs (Theorem~\ref{thm:poly-cra-undecidability}).

Moreover, given a CRA~$\mathcal{A}$, 
is it undecidable whether \(\semantics{\mathcal{A}}(w) \ge 0\) for all \(w \in \Sigma^*\). These follow from classical results for weighted finite automata (WFAs) where the boundedness problem over the non-negative rationals is known to be undecidable~\cite{LICS22}. 
Since CRAs strictly generalize WFAs~\cite{Alur2013}, 
it follows that neither non-negativity nor normalization can be algorithmically checked for general cost register automata.

\subsection{Linear Cost Register Automata}
\label{sec:linear-cra}

To recover decidability, we focus on a restricted and well-behaved fragment of cost register automata. 
First, to guarantee non-negativity of register valuations, we define the automaton over the positive semiring 
\((\mathbb{R}_{\ge 0}, +, \times, 0, 1)\) only, ensuring that all intermediate and output values remain non-negative. 
Second, we constrain both register updates and output expressions to be \emph{affine}, that is, linear transformations combined with additive constants. 
The analysis of affine CRAs can be further simplified using a standard linearization technique:

\begin{lemma}
\label{lemma:problem:linear}
The class of functions computable by CRAs with linear updates coincide with the class of functions computable by CRAs with affine updates.
\end{lemma}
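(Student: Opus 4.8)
The plan is to prove both inclusions, but only one direction requires work: every affine CRA can be simulated by a linear one (the converse is trivial, since linear updates are a special case of affine updates). The standard linearization trick is to add a single extra register, call it $Z$, which is initialized to the constant $1$ and is preserved by every transition, i.e. its update is always $Z := Z$. Then any affine update of the form $X_i := \sum_j c_{ij} X_j + d_i$ can be rewritten as the purely linear update $X_i := \sum_j c_{ij} X_j + d_i Z$, since $Z$ carries the value $1$ throughout the run. The same substitution handles the affine output function $\mu$.

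The key steps, in order, would be: (i) given an affine CRA $\mathcal{A} = (\Sigma, Q, X, q_0, x_0, \delta, f, \mu)$, define $\mathcal{A}' = (\Sigma, Q, X \cup \{Z\}, q_0, x_0', \delta, f', \mu')$ where $Z \notin X$; (ii) set $x_0'$ to agree with $x_0$ on $X$ and $x_0'(Z) = 1$; (iii) for each $q \in Q$, $\sigma \in \Sigma$, replace every additive constant $d$ appearing in $f(q,\sigma,\cdot)$ or in $\mu$ by the linear term $d \cdot Z$, and add the update $Z := Z$; (iv) prove by induction on $|w|$ that in the run of $\mathcal{A}'$ on $w$, the register valuation restricted to $X$ equals the valuation of $\mathcal{A}$ on $w$, and $Z$ holds $1$. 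The base case is immediate from the choice of $x_0'$; the inductive step is a direct calculation substituting $Z = 1$ into the linear updates of $\mathcal{A}'$ to recover the affine updates of $\mathcal{A}$. Finally, conclude $\semantics{\mathcal{A}'} = \semantics{\mathcal{A}}$ by applying the same substitution to $\mu'$, and note the construction preserves non-negativity of all register values when the original CRA is over $\mathbb{R}_{\ge 0}$, since $d_i \ge 0$ forces $d_i Z \ge 0$.

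There is essentially no hard part here — the argument is a textbook linearization — so the only thing to be careful about is bookkeeping: making sure the invariant ($Z \equiv 1$ and the $X$-projection matches) is stated precisely enough that the inductive step goes through, and making sure the substitution is applied uniformly to \emph{both} the update functions and the output function. One minor subtlety worth a remark is that the number of registers grows only by one and the state set is unchanged, so the construction is polynomial (indeed linear) in the size of $\mathcal{A}$; this matters because later decidability and complexity results are stated for linear CRAs, and we want the reduction from affine to linear to be efficient. I would present the proof as the explicit construction of $\mathcal{A}'$ followed by the one-line invariant and its two-case induction.
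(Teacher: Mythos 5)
Your proposal is correct and follows essentially the same route as the paper's proof sketch: introducing a fresh register pinned to the constant $1$, absorbing each additive constant $d$ into the linear term $d\cdot Z$ (which is exactly the block-matrix form $\bigl[\begin{smallmatrix} A & b \\ 0 & 1 \end{smallmatrix}\bigr]$ used in the paper), and applying the same substitution to the output function. Your additional remarks on the invariant, the induction, and preservation of non-negativity over $\mathbb{R}_{\ge 0}$ are consistent with, and slightly more detailed than, the paper's sketch.
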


\begin{proof}(Sketch)

Let $\mathcal{A} = (Q, \Sigma, X, \delta, \rho, q_0, \mu)$ be a CRA with affine updates of the form $x\mapsto A_{q,\sigma}x+b_{q,\sigma}$.
We add a new register $x_0$ that always holds $1$.
Replacing each affine update by the linear map
\(
(x,x_0)\mapsto
\begin{bmatrix}
A_{q,\sigma} & b_{q,\sigma}\\
0 & 1
\end{bmatrix}
(x,x_0)
\)
and output by $\mu'(x,x_0)=[c_q^\top\; d_q](x,x_0)$
preserves semantics. \qed
\end{proof}

These restrictions yield the class of \emph{linear CRAs}, which is placed within a well-studied computational model whose dynamics can be represented by systems of linear transformations over the non-negative reals. Linear CRAs admit an exact correspondence with weighted automata:

\begin{theorem}[Alur et al.~\cite{Alur2013}]
\label{thm:linear-wa-equivalence}
The class of functions computed by CRAs with linear updates
coincides with those computed by weighted automata.
\end{theorem}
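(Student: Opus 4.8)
The plan is to prove the two inclusions separately by explicit constructions, working throughout over the semiring $\Rp$ so that non-negativity is automatically preserved. In both directions I first invoke Lemma~\ref{lemma:problem:linear} to assume without loss of generality that the CRA output function is itself linear, say $\mu_q(x)=c_q^\top x$ for a non-negative vector $c_q$ (the affine constant is absorbed into the extra register that always holds $1$).

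\textbf{From WFA to linear CRA.} Given a WFA $\mathcal{A}=(\Sigma,Q,\lambda,M,\mu)$, I build a linear CRA with a single control state, register set $X=Q$, initial valuation $x_0=\lambda$, linear update $x\mapsto M_\sigma^\top x$ on reading $\sigma$, and linear output $x\mapsto \mu^\top x$. A routine induction on $|w|$ shows that after reading $w=w_1\cdots w_n$ the register valuation equals $M_{w_n}^\top\cdots M_{w_1}^\top\lambda$, whose inner product with $\mu$ is the scalar $\lambda^\top M_{w_1}\cdots M_{w_n}\mu=\semantics{\mathcal{A}}(w)$. All data are non-negative, so this is a legitimate linear CRA over $\Rp$.

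\textbf{From linear CRA to WFA.} Given a linear CRA with control states $Q$, registers $X$ of size $d$, deterministic transition $\delta$, linear updates $x\mapsto M_{q,\sigma}x$, initial state $q_0$, initial valuation $x_0$, and output $c_q^\top x$, I take the WFA with state set $Q\times X$ and weights: initial weight $\lambda_{(q,i)}=(x_0)_i$ if $q=q_0$ and $0$ otherwise; transition weight from $(q,i)$ to $(q',j)$ on $\sigma$ equal to $(M_{q,\sigma})_{j,i}$ if $q'=\delta(q,\sigma)$ and $0$ otherwise; final weight $\mu_{(q,j)}=(c_q)_j$. An induction on prefix length shows that the running row vector $\lambda^\top M^{\mathrm{WFA}}_{v_1}\cdots M^{\mathrm{WFA}}_{v_k}$ is supported exactly on $\{q_k\}\times X$, where $q_k$ is the unique CRA control state reached after reading $v$, and that its restriction there is the current register valuation $x_k$; contracting with the final vector yields $c_{q_n}^\top x_n=\semantics{\text{CRA}}(w)$. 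Again every weight is a transpose or re-indexing of a non-negative quantity, so the WFA is over $\Rp$.

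\textbf{Main obstacle.} Neither construction is deep; the only genuinely delicate point is the bookkeeping of matrix-product order. The CRA's register dynamics compose in a stack-like, innermost-first fashion (the most recent update sits leftmost), whereas a WFA multiplies its transition matrices left-to-right in reading order. This is precisely why the WFA construction transposes each update matrix and swaps the roles of the entry and exit vectors; getting the orientation right (rather than, say, being forced to reverse the input word) is the part that requires care. One must also confirm that embedding the deterministic control $\delta$ into the enlarged state space $Q\times X$ creates no spurious runs, which holds because for each pair $(q,\sigma)$ there is a unique successor $q'=\delta(q,\sigma)$, so the block structure of $M^{\mathrm{WFA}}_\sigma$ is a permutation-like selection of $Q$-blocks. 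Everything else is a straightforward induction.
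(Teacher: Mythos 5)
Your proposal is correct and follows essentially the same route the paper indicates: the CRA-to-WFA direction is exactly the $Q\times X$ product construction the paper sketches after the theorem statement (the theorem itself is cited from Alur et al.\ rather than proved in full), and your transpose bookkeeping and the induction on the running row vector both check out. The converse WFA-to-CRA direction via a single control state with registers indexed by $Q$ is the standard encoding and is likewise sound.
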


The key idea is that a linear CRA with $k$ registers and $m$ states can be converted to an equivalent weighted automaton with $m \cdot k$ states by treating each register-value pair as a state, with transitions derived from the linear update functions. This equivalence immediately yields numerous closure properties and normal forms for linear CRAs. 
We can now prove our first main result:

\begin{theorem}
\label{thm:affine-decidable}
Given a CRA $\mathcal{A}$ over $\Rp$ with affine register updates and affine output function, the total mass $\mass{\Sigma^*}{\semantics{\calA}} = \sum_{w \in \Sigma^*} \semantics{\calA}(w)$ is computable.
\end{theorem}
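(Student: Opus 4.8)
The plan is to reduce the computation of the total mass to a matrix-inversion (Neumann series) computation over the weighted-automaton representation. By Lemma~\ref{lemma:problem:linear} we may assume $\calA$ has \emph{linear} updates, and by Theorem~\ref{thm:linear-wa-equivalence} we convert $\calA$ into an equivalent weighted automaton $\calB = (\Sigma, Q, \lambda, M, \mu)$ over $\Rp$ with $N = m \cdot k$ states, where $m$ is the number of control states and $k$ the number of registers of $\calA$. Then $\semantics{\calA}(w) = \semantics{\calB}(w) = \lambda^\top M_{w_1}\cdots M_{w_n}\mu$ for every word $w = w_1\cdots w_n$, and I want to evaluate $\sum_{w\in\Sigma^*}\semantics{\calB}(w)$.

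The key step is to collapse the sum over words of the same length. Writing $M := \sum_{\sigma\in\Sigma} M_\sigma$ (the ``letter-summed'' transition matrix, which is still a non-negative matrix since we are over $\Rp$), a routine distributivity argument gives
\[
\sum_{w\in\Sigma^n}\semantics{\calB}(w) = \lambda^\top M^{\,n}\mu,
\qquad\text{hence}\qquad
\mass{\Sigma^*}{\semantics{\calB}} = \sum_{n=0}^{\infty}\lambda^\top M^{\,n}\mu
= \lambda^\top\!\Bigl(\sum_{n=0}^{\infty} M^{\,n}\Bigr)\mu .
\]
By the Neumann series fact recalled in Section~\ref{sec:notation:matrix}, the matrix series $\sum_{n\ge 0} M^n$ converges if and only if $\rho(M) < 1$, in which case it equals $(I - M)^{-1}$, computable in $O(N^3)$ time. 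So the algorithm is: (i) build $M$; (ii) compute $\rho(M)$; (iii) if $\rho(M) \ge 1$, output that the total mass is $+\infty$ (it diverges because all summands $\lambda^\top M^n\mu$ are non-negative, so there is no cancellation to rescue convergence); (iv) if $\rho(M) < 1$, output $\lambda^\top (I - M)^{-1}\mu$. All arithmetic stays in the field generated by the (rational) entries of $\calB$, so the output is an exactly computable rational number when the input is rational.

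The one genuinely delicate point — and the place the ``$\Rp$'' hypothesis does real work — is step~(iii): one must argue that $\rho(M)\ge 1$ actually forces \emph{divergence} of $\mass{\Sigma^*}{\semantics{\calB}}$, not merely failure of the closed-form Neumann identity. This is where non-negativity is essential: since $\lambda,\mu$ and all $M_\sigma$ have entries in $\Rp$, every partial sum $\sum_{n=0}^{T}\lambda^\top M^n\mu$ is monotone non-decreasing, so the total mass is either finite or $+\infty$, with no oscillation. To pin down that it is $+\infty$ when $\rho(M)\ge 1$, I would restrict attention to a reachable-and-co-reachable sub-block of $M$ (states $i$ with $(\lambda^\top M^{a})_i > 0$ and $(M^{b}\mu)_i > 0$ for some $a,b$ — only such states contribute, and they contribute non-negatively), pass to an irreducible component within that sub-block whose spectral radius is still $\ge 1$ (possible by the standard Frobenius normal-form decomposition, since the spectral radius of a non-negative matrix equals the max over its irreducible diagonal blocks), and invoke the Perron--Frobenius bound $\rho \le \max_i\sum_j A_{ij}$ together with irreducibility to show the relevant entries of $M^n$ do not vanish; then the contribution of that component alone already diverges. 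Handling the bookkeeping of ``which states matter'' cleanly is the main obstacle; everything else is the routine $O(N^3)$ linear algebra already cited in the preliminaries.
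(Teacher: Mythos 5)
Your proposal is correct and follows essentially the same route as the paper: linearize via Lemma~\ref{lemma:problem:linear}, convert to a weighted automaton via Theorem~\ref{thm:linear-wa-equivalence}, letter-sum the transition matrices into $M=\sum_\sigma M_\sigma$, and decide convergence of the Neumann series by comparing $\rho(M)$ to $1$, returning $\lambda^\top(I-M)^{-1}\mu$ in the convergent case. The one place you differ is the divergence direction: the paper trims the automaton to reachable and co-reachable states and asserts that the result is strongly connected (hence $M$ irreducible), whereas you keep a possibly reducible $M$ and argue divergence via the Frobenius normal form, locating an irreducible block of spectral radius at least $1$ that is both reachable and co-reachable; your version is the more careful of the two, since trimming does not in general yield strong connectivity, and it correctly isolates where the non-negativity hypothesis is actually used.
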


\begin{proof}
By Lemma~\ref{lemma:problem:linear}, we may assume without loss of generality that $\calA$ has linear updates. By Theorem~\ref{thm:linear-wa-equivalence}, $\mathcal{A}$ is equivalent to a weighted automaton. Let $\mathcal{W}$ be its \emph{trimmed} version (containing only states that are both reachable from an initial state and co-reachable from a final state), with $n$ states $Q = \{1, \ldots, n\}$.

Let $M_\sigma \in \Rp^{n \times n}$ be the transition matrix for symbol $\sigma \in \Sigma$, where $(M_\sigma)_{i,j}$ is the weight of transitioning from state $i$ to state $j$ on $\sigma$. Let $\iota \in \Rp^n$ be the initial weight vector and $c \in \Rp^n$ the final weight vector.

Define the transition matrix $M = \sum_{\sigma \in \Sigma} M_\sigma$. The trimmed automaton $\mathcal{W}$ is strongly connected by construction, and hence $M$ a nonnegative irreducible matrix. The total weight of all strings of length $k$ is given by $c^\top M^k \iota$. Therefore, the total mass is:
\[
\mass{\Sigma^*}{\semantics{\calA}} = \sum_{w \in \Sigma^*} \semantics{\calA}(w) = {c}^\top \left( \sum_{k=0}^\infty M^k \right) {\iota}.
\]
This Neumann series converges if and only if the spectral radius $\rho(M) < 1$. The spectral radius of a nonnegative matrix irreducible matrix is computable as the largest real eigenvalue by the Perron-Frobenius theorem~\cite{HornJohnson2012}.
\begin{itemize}
    \item If $\rho(M) \geq 1$, then $\mass{\Sigma^*}{\semantics{\calA}}$ diverges.
    \item If $\rho(M) < 1$, then $\mass{\Sigma^*}{\semantics{\calA}}$ converges to ${c}^\top (I - M)^{-1} {\iota}$.
\end{itemize}
The computation involves computing $\rho(M)$ via eigenvalue computation in $O((|Q| \cdot |X|)^3)$ time, and if the spectral radius $\rho(B) < 1$, then solving the linear system $(I - M)x = \iota$ via Gaussian elimination in $O((|Q| \cdot |X|)^3)$ time. Thus, the total mass is computable in polynomial time in the size of the input CRA $\calA$. \qed
\end{proof}

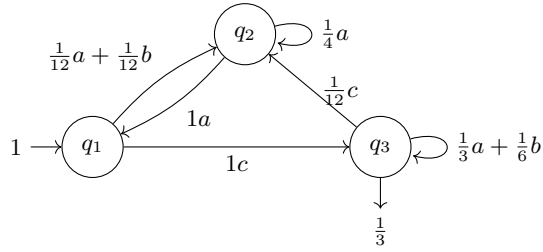
\begin{figure}[t]
\centering
\begin{tikzpicture}[->,auto,accepting/.style=accepting by arrow]
    \node[state,initial,initial text = $1$] (q1) {$q_1$};
    \node[state](q2)[right = 1.2cm of q1, yshift = 1.5cm]{$q_2$};
    \node[state,accepting,accepting text = $\frac{1}{3}$, accepting where = below] (q3)[right=3cm of q1]{$q_3$};
    \path (q1) edge [bend left = 12] node{$\frac{1}{12}a+\frac{1}{12}b$}(q2);
    \path (q2) edge[bend left = 12] node{$1a$} (q1);
    \path (q1) edge [below] node{$1c$} (q3);
    \path (q3) edge [right]node{$\frac{1}{12}c$} (q2);
    \path (q2) edge [loop right] node {$\frac{1}{4}a$}(q2);
    \path (q3) edge [loop right] node {$\frac{1}{3}a + \frac{1}{6}b$}(q3);
\end{tikzpicture}
\caption{A weighted automaton over $\Sigma = \{a,b,c\}$ with initial weight $1$ at $q_1$ and final weight $\tfrac{1}{3}$ at $q_3$. Transition labels denote linear combinations of weighted symbol transitions (e.g., $\tfrac{1}{12}a + \tfrac{1}{12}b$ means weight $\tfrac{1}{12}$ on $a$ and $\tfrac{1}{12}$ on $b$). This weighted automaton is equivalent to the cost register automata in Figure~\ref{fig:running-example}.}
\label{fig:running-example-wa}
\end{figure}

\begin{example}
The WFA in Figure~\ref{fig:running-example-wa} is equivalent to the CRA in Figure~\ref{fig:running-example}. For this WFA, we have:
\[
M_a+M_b+M_c =
\begin{pmatrix}
0 & \tfrac{1}{6} & 1\\
\tfrac{3}{4} & \tfrac{1}{4} & 0\\
0 & \tfrac{1}{12} & \tfrac{1}{2}
\end{pmatrix}.
\]
Its row sums are $(1\tfrac{1}{6},1,\tfrac{7}{12})$ and $\rho(M)=\tfrac{3}{4}$.
The total mass $c^\top (I-M)^{-1}\iota = 1$, confirming that the automaton
defines a normalized stochastic language.
\end{example}

In summary, while the stochasticity problem is undecidable in general (Theorem~\ref{thm:cra-undecidability}),
it becomes efficiently decidable for the affine fragment. For affine CRAs over $\Rp$: $\rho(M)<1$ ensures finite mass, and the positive semiring guarantees non-negativity.
Normalizing the output by $\mass{\Sigma^*}{\semantics{\mathcal{A}}}$ yields a stochastic language. 
This fragment is thus a useful model for representing probability distributions over strings. 


\section{Localizable Stochasticity and Sub-stochastic Cost}
\label{sec:sub-stochastic}

The decidability result in Theorem~\ref{thm:affine-decidable} establishes that the total mass 
of a linear CRA is computable whenever the combined transition 
matrix $M = \sum_{\sigma \in \Sigma} M_\sigma$ is 
irreducible and has spectral radius $\rho(M) < 1$. 

In contrast, classical probabilistic models enforce local properties: each 
state’s outgoing weights sum to~$1$, guaranteeing that the semantics forms 
a probability distribution without the need for a spectral analysis~\cite{bollig2015weighted}.

In this section, we show that every finite-mass weighted automaton
(and hence every CRA with affine updates defining a stochastic language)
admits an equivalent representation in which stochasticity is \emph{localizable}.

\subsection{Locally Sub-stochastic Weighted Automata}

Formally, we show that any WFA with only reachable and co-reachable states and $\rho(M) < 1$
can be normalized so that the outgoing weights from each state sum to at most~$1$.
We call this property \emph{local sub-stochasticity}.
Such normalization can be achieved by a similarity transform
based on the Perron–Frobenius eigenvector,
and it preserves the global semantics of the automaton.

\begin{proposition}
[Perron–Frobenius normalization]
\label{thm:diag-similarity}
Let $A \in \Rp^{n\times n}$ be an irreducible matrix with spectral radius $\rho(A) < 1$. 
Then there exists a positive diagonal matrix $D = \mathrm{diag}(d_1, \dots, d_n)$ with 
$d_i > 0$ such that $B = DAD^{-1}$ is strictly row sub-stochastic.    
\end{proposition}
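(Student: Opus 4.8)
The plan is to use the (right) Perron–Frobenius eigenvector of $A$ to define the diagonal matrix $D$, and then verify that the resulting similarity transform sends the row sums below~$1$. Concretely, since $A$ is non-negative and irreducible, Perron–Frobenius gives a strictly positive eigenvector $v = (v_1, \dots, v_n)$ with $v_i > 0$ satisfying $A v = \rho(A)\, v$, where $\rho(A) < 1$ by hypothesis. I would set $D = \mathrm{diag}(1/v_1, \dots, 1/v_n)$, so that $D^{-1} = \mathrm{diag}(v_1, \dots, v_n)$ and $D^{-1} \mathbf{1} = v$, where $\mathbf{1}$ is the all-ones column vector. Then $B = D A D^{-1}$ is again non-negative (conjugation by a positive diagonal matrix preserves the sign pattern), and its vector of row sums is
\[
B \mathbf{1} = D A D^{-1} \mathbf{1} = D A v = D (\rho(A) v) = \rho(A)\, D v = \rho(A)\, \mathbf{1}.
\]
Hence every row of $B$ sums to exactly $\rho(A) < 1$, so $B$ is row sub-stochastic, and in fact uniformly so (every row sum equals $\rho(A)$), which is stronger than the strict sub-stochasticity claimed.

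The key steps, in order, are: (1) invoke Perron–Frobenius to obtain a strictly positive right eigenvector $v$ for the eigenvalue $\rho(A)$; (2) define $D = \mathrm{diag}(v_i^{-1})$ and note $D$ is positive diagonal, hence invertible, with $B = DAD^{-1}$ having the same non-negative sign pattern as $A$; (3) compute $B\mathbf{1} = \rho(A)\mathbf{1}$ via the eigenvector identity as above; (4) conclude each row sum equals $\rho(A) < 1$, giving strict row sub-stochasticity. Optionally I would remark that $B$ is exactly the transition matrix one gets by the standard ``$h$-transform'' rescaling of weights $B_{ij} = A_{ij} v_j / v_i$, which makes the connection to the weighted-automaton normalization transparent for the next subsection.

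The only real subtlety — not so much an obstacle as a point to state carefully — is the appeal to Perron–Frobenius: I need the version guaranteeing that an irreducible non-negative matrix has a strictly positive eigenvector associated with its spectral radius $\rho(A)$ (the Perron eigenvalue), which holds even when $A = 0$ is excluded by irreducibility for $n \geq 2$; for the degenerate case $n = 1$ the matrix is the scalar $A = \rho(A) < 1$ and $D = 1$ works trivially. No estimates or convergence arguments are needed here — the proposition is a purely algebraic consequence of the eigenvector identity, and the quantitative content ($\rho(A) < 1$) is inherited directly from the hypothesis. This is why I expect the proof to be short; the work in this section is really in applying this normalization to a trimmed WFA and checking that semantics is preserved, which is the content of the subsequent theorem rather than this proposition.
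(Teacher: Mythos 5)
Your proof is correct and takes essentially the same route as the paper: conjugate by a diagonal matrix built from the Perron--Frobenius right eigenvector and read off the row sums from the eigenvector identity $Av=\rho(A)v$. Your version is in fact the internally consistent one --- with $D=\mathrm{diag}(v_1^{-1},\dots,v_n^{-1})$ the row sums of $B$ are exactly $(Av)_i/v_i=\rho(A)<1$ for \emph{every} row (the paper's sketch takes $D=\mathrm{diag}(v_i)$ but then computes the row sum as $(Av)_i/v_i$, which corresponds to your choice of $D$), so you obtain a slightly stronger conclusion than the stated one.
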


\begin{proof}[Sketch]
By the Perron–Frobenius theorem, there exists a unique (up to scaling) positive eigenvector 
$v \in \Rp$ such that $Av = \rho(A) v$. Since $\rho(A) < 1$, we have $Av < v$ component-wise. 
Let $D = \mathrm{diag}(v_1, \ldots, v_n)$ and define $B = DAD^{-1}$. Then for each row~$i$,
\[
\sum_{j} B_{ij} 
= \sum_{j} \frac{v_i}{v_j} A_{ij}
= \frac{(Av)_i}{v_i}
\le 1,
\]
as required. As $\rho(A) < 1$, for some row $i$, $\sum_{j} B_{ij} < 1$. \qed
\end{proof}

Given a WFA $\mathcal{W} = (\Sigma, Q, \lambda, \{M_\sigma\}, \mu)$, define  
$M = \sum_{\sigma} M_\sigma$. Applying Theorem~\ref{thm:diag-similarity} to~$M$ yields a 
diagonal matrix~$D$ such that the similarity transform $M' = DMD^{-1}$ is row-substochastic. 
We then define a \emph{normalised automaton} $\mathcal{W}' = (\Sigma, Q, \lambda', \{M'_\sigma\}, \mu')$ with
\[
M'_\sigma = D M_\sigma D^{-1}, 
\qquad \lambda' = D^{-1}\lambda, 
\qquad \mu' = D\mu.
\]
The semantics of $\mathcal{W}'$ satisfies 
$\llbracket \mathcal{W}' \rrbracket(w) = \llbracket \mathcal{W} \rrbracket(w)$ for all $w \in \Sigma^*$, 
since each path weight is invariant under this rescaling.
Moreover, every row of $M'$ now sums to at most~$1$, hence the automaton is 
\emph{locally sub-stochastic}. 
Figure~\ref{fig:running-example-wa2} illustrates this normalization for the running example.

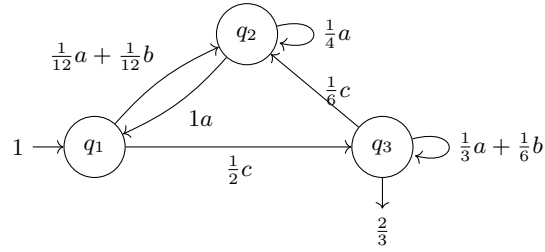
\begin{figure}[t]
\centering
\begin{tikzpicture}[->,auto,accepting/.style=accepting by arrow]
    \node[state,initial,initial text = $1$] (q1) {$q_1$};
    \node[state](q2)[right = 1.2cm of q1, yshift = 1.5cm]{$q_2$};
    \node[state,accepting,accepting text = $\frac{2}{3}$, accepting where = below] (q3)[right=3cm of q1]{$q_3$};
    \path (q1) edge [bend left = 12] node{$\frac{1}{12}a+\frac{1}{12}b$}(q2);
    \path (q2) edge[bend left = 12] node{$1a$} (q1);
    \path (q1) edge [below] node{$\frac{1}{2}c$} (q3);
    \path (q3) edge [right]node{$\frac{1}{6}c$} (q2);
    \path (q2) edge [loop right] node {$\frac{1}{4}a$}(q2);
    \path (q3) edge [loop right] node {$\frac{1}{3}a + \frac{1}{6}b$}(q3);
\end{tikzpicture}
\caption{A weighted automaton over $\Sigma = \{a,b,c\}$ with initial weight $1$ at $q_1$ and final weight $\tfrac{2}{3}$ at $q_3$. This automaton is semantically equivalent to the one in Figure~\ref{fig:running-example-wa}, and is sub-stochastic, that is the sum of outgoing weights of every state at most $1$.}
\label{fig:running-example-wa2}
\end{figure}

This gives us characterisation of rational stochastic languages:
\begin{theorem}[Local Sub-stochastic Characterisation]
\label{thm:local-substochastic}
A quantitative language \( f \) belongs to the class of stochastic rational languages
\(\Srat\) if and only if there exists a \emph{locally sub-stochastic} weighted automaton
\(\mathcal{W}\) such that $\semantics{\mathcal{W}} = f$.
\end{theorem}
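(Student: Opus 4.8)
The plan is to prove the two directions separately, with the $(\Leftarrow)$ direction being essentially immediate and the $(\Rightarrow)$ direction carrying all the work by assembling the machinery developed above.

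For the $(\Leftarrow)$ direction, suppose $\mathcal{W}$ is a locally sub-stochastic weighted automaton with $\semantics{\mathcal{W}} = f$. Since $\mathcal{W}$ lives over the positive semiring $\Rp$, every path weight is non-negative, so $f(w) \ge 0$ for all $w$. It remains to check that $\mathrm{mass}_{\Sigma^*}(f) = 1$; but this is a normalization that does not affect membership in $\Srat$ in the sense that $f \in \Srat$ iff $f$ is a stochastic language expressible by a non-negative WFA, and local sub-stochasticity guarantees $M = \sum_\sigma M_\sigma$ is row sub-stochastic, hence $\rho(M) \le 1$. I would need to be slightly careful here: local sub-stochasticity alone gives $\rho(M) \le 1$, not $\rho(M) < 1$, and the total mass need not be exactly $1$. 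The cleanest reading, consistent with the narrative of the section, is that $\Srat$ is the class of \emph{rational} stochastic languages and the theorem asserts that each such language has a locally sub-stochastic representative; so for $(\Leftarrow)$ I will assume (or build into the definition of the normalized automaton) that the final weights are chosen so the total mass is $1$, and simply remark that non-negativity plus finite mass plus rescaling of $\mu$ delivers a stochastic language, which is rational since WFAs are closed under scaling of the final vector.

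For the $(\Rightarrow)$ direction, let $f \in \Srat$. By definition $f$ is computed by some WFA $\mathcal{A}$ over $\Rp$. First I would trim $\mathcal{A}$, keeping only states that are reachable from the initial support and co-reachable to the final support; this removes states contributing zero to every path weight, preserves $\semantics{\mathcal{A}}$, and makes the trimmed automaton strongly connected on its transition structure, so $M = \sum_\sigma M_\sigma$ is irreducible. (If the trimmed automaton is empty then $f \equiv 0$, which is not stochastic, so this case does not arise.) Since $f$ is stochastic its total mass is $1 < \infty$, so by the computation in the proof of Theorem~\ref{thm:affine-decidable} the Neumann series $\sum_k M^k$ converges, forcing $\rho(M) < 1$. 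Now apply Proposition~\ref{thm:diag-similarity} to $M$: there is a positive diagonal $D$ with $M' = DMD^{-1}$ strictly row sub-stochastic. Define $\mathcal{W}' = (\Sigma, Q, D^{-1}\lambda, \{D M_\sigma D^{-1}\}, D\mu)$. Each path weight $\lambda^\top M_{w_1}\cdots M_{w_n}\mu$ is invariant under inserting $D^{-1}D$ between consecutive factors, so $\semantics{\mathcal{W}'} = \semantics{\mathcal{A}} = f$; and since $\sum_\sigma M'_\sigma = M'$ is row sub-stochastic, so is each $M'_\sigma$ entrywise-bounded-by-it, giving local sub-stochasticity of $\mathcal{W}'$. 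One subtlety is that $D^{-1}\lambda$ and $D\mu$ may have entries exceeding $1$; but the notion of local sub-stochasticity as defined here constrains only the transition weights (row sums of $M'$), not the boundary vectors, so this is fine.

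The main obstacle, and the step I would treat most carefully, is reconciling the \emph{sub}-stochastic (rather than stochastic) local condition with the \emph{exact} normalization $\mathrm{mass}_{\Sigma^*}(f) = 1$: local sub-stochasticity is preserved under the similarity transform and is strictly weaker than the row sums being exactly $1$, which is precisely why a genuine probability distribution can be represented while the "leftover" mass is absorbed into the final weight vector $D\mu$ (and the gap at each strictly-sub-stochastic row). I would make explicit that the role of the initial and final vectors is exactly to carry the normalization, so that requiring $\sum_\sigma M_\sigma$ to be merely row sub-stochastic is the right local syntactic condition — strong enough (via $\rho \le 1$ and non-negativity) to be sound, and, by the Perron–Frobenius similarity transform, achievable for every rational stochastic language. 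A secondary point worth a sentence is the degenerate case where some symbol's matrix $M_\sigma$ is zero or the alphabet interacts with the trimming; none of these break the argument but they are worth noting so the reduction to the irreducible case is clean.
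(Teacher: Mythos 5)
Your proposal follows essentially the same route as the paper: trim the automaton, use finiteness of the total mass to conclude $\rho(M)<1$ for the irreducible matrix $M=\sum_{\sigma} M_\sigma$, and apply the Perron--Frobenius diagonal similarity of Proposition~\ref{thm:diag-similarity} to obtain a semantics-preserving locally sub-stochastic representation. Your observation that the $(\Leftarrow)$ direction is not automatic --- local sub-stochasticity alone yields only $\rho(M)\le 1$ and does not force the total mass to equal $1$, so the normalization must be carried by the boundary vectors --- is a point the paper itself leaves implicit, and you are right to flag it.
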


The notion of rational stochastic languages has been investigated in several equivalent frameworks,
including probabilistic automata and rational series over the non-negative reals
(\cite{bollig2015weighted,Denis2004}).
The local sub-stochastic normalization bridges these perspectives,
revealing that the analytic convergence condition $\rho(M) < 1$
is equivalent to the existence of a locally normalizable form.

\subsection{Stochastic Regular Expressions}
\label{subsec:sre}

This local sub-stochasticity enables a Kleene-Schützenberger 
characterization of rational stochastic languages. 
We introduce \emph{stochastic regular expressions} (SREs), which extend ordinary regular 
expressions with
a probabilistic choice and discounted iteration operators. 
SREs capture exactly the class of quantitative languages definable by a sub-stochastic weighted automata.

\begin{definition}[Syntax of Stochastic Regular Expressions]
\label{def:sre}
Let $\Sigma$ be a finite alphabet. The grammar of SREs is
\[
r ::= \delta_\sigma 
\mid \alpha r_1 + (1-\alpha) r_2 
\mid r_1 \cdot r_2 
\mid r^*_\alpha
\]
where $\sigma \in \Sigma$, $r_1,r_2$ are SREs, and $\alpha \in (0,1)$.
\end{definition}

Here $\delta_\sigma$ denotes the Dirac distribution concentrated on~$\sigma$ (see Example~\ref{ex:dirac}), the operator 
$\alpha r_1 + (1-\alpha)r_2$ represents convex combination, $r_1\cdot r_2$ denotes the 
Cauchy product (probabilistic concatenation), and $r^*_\alpha$ denotes a \emph{discounted Kleene star}, where the number of repetitions follows a 
shifted geometric distribution with parameter~$\alpha$ (see Definition~\ref{def:operators}). We can now prove our characterisation theorem:

\begin{theorem}[Kleene-Schützenberger Characterisation of Stochastic Regular Languages]
\label{thm:kleene-characterisation}
The class of stochastic regular languages, denoted by~$\Srat$, is the smallest class of 
quantitative languages over~$\Sigma$ that contains all Dirac distributions 
$\{\delta_\sigma \mid \sigma \in \Sigma\}$ and is closed under convex combinations, Cauchy products, and discounted Kleene star.
\end{theorem}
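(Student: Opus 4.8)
The plan is to establish the two inclusions separately. Write $\mathcal{R}$ for the smallest class containing all Dirac distributions $\{\delta_\sigma\}$ and closed under convex combination, Cauchy product, and discounted Kleene star. We must show $\mathcal{R} = \Srat$.

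\emph{The inclusion $\mathcal{R} \subseteq \Srat$} is the routine direction. Each $\delta_\sigma$ is trivially a rational stochastic language (a one-state, two-state-with-transition WFA over $\Rp$). It then suffices to show that $\Srat$ is closed under the three operators, and that each operator is realizable at the level of weighted automata. Closure of the \emph{semantic} class $\Stoch(\Sigma^*)$ under these operators was already noted after Definition~\ref{def:operators}; what remains is to check that applying each operator to (locally sub-stochastic) WFAs yields a WFA over $\Rp$ of the same kind. For convex combination, take a disjoint union of the two automata and split the initial weight vector as $\lambda \cdot \lambda'_1$ and $(1-\lambda)\cdot \lambda'_2$; for Cauchy product, redirect final weights of the first automaton into initial weights of the second (the standard concatenation construction); for the discounted Kleene star, add a fresh initial/final structure that, on exiting a copy of $r_1$, either stops with probability $\alpha$ or feeds back with probability $1-\alpha$, so that the number of iterations is shifted-geometric with parameter $\alpha$. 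Each construction keeps all weights non-negative and keeps row sums at most $1$, so by Theorem~\ref{thm:local-substochastic} the result stays in $\Srat$; a short induction on the structure of the SRE finishes this direction.

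\emph{The inclusion $\Srat \subseteq \mathcal{R}$} is the substantive direction and the main obstacle. Here I would invoke Theorem~\ref{thm:local-substochastic}: any $f \in \Srat$ is computed by a locally sub-stochastic WFA $\mathcal{W}$, whose combined transition matrix $M = \sum_\sigma M_\sigma$ has $\rho(M) < 1$ by Perron–Frobenius, so all the relevant Neumann series converge. The strategy is a Brzozowski/state-elimination argument adapted to the weighted, normalized setting: process the states of $\mathcal{W}$ one at a time, maintaining a generalized automaton whose edges are labelled by SREs (elements of $\mathcal{R}$) rather than letters, and showing the label of the edge from the initial to the final ``meta-state'' at the end denotes $f$. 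The delicate point is that classical state elimination uses the unrestricted $+$, $\cdot$, and $*$ of a semiring, whereas our operators are the \emph{normalized} ones — convex combination instead of sum, and discounted star instead of plain star — so the bookkeeping must track not just SREs but also the associated sub-stochastic masses (the ``probability of reaching this edge''). Concretely, I would carry along with each SRE-labelled edge a scalar weight in $(0,1]$, arrange that at each meta-state the outgoing weights sum to at most $1$ (this is exactly local sub-stochasticity, preserved through elimination), and when eliminating a state with a self-loop of total mass $p < 1$ realize the resulting geometric feedback as a discounted Kleene star with parameter $1-p$ applied to the appropriately renormalized loop body. The convergence $\rho(M)<1$ is precisely what guarantees every self-loop mass encountered is strictly below $1$, so the discounted star is always well-defined.

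The remaining piece is a base-case and normalization lemma: a single letter-edge of weight $w \in (0,1]$ denotes $w \cdot \delta_\sigma$, and more generally any finitely-supported sub-distribution is a finite convex combination of Diracs scaled by its total mass, hence expressible in $\mathcal{R}$ once renormalized — so the ``scalar weight plus SRE'' pairs I carry along can always be read back as genuine elements of $\mathcal{R}$ at the moments where the running mass is exactly $1$. Stitching these together by induction on the number of eliminated states yields $f \in \mathcal{R}$, completing the proof. The main risk points I would watch are: (i) making sure the inductive invariant ``all outgoing SRE-weights at each meta-state sum to $\le 1$, with equality at the source'' is genuinely maintained by every elimination step, and (ii) handling the empty word / $\varepsilon$ correctly, since the discounted star of Definition~\ref{def:operators} implicitly sets $r(\varepsilon) = 0$, which must be reconciled with automata that may assign positive weight to $\varepsilon$ via coincident initial-and-final states.
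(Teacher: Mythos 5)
Your proposal follows essentially the same route as the paper: the easy inclusion is handled by structural induction on the SRE with the standard disjoint-union, concatenation, and geometric-feedback constructions (the paper's Lemma~\ref{lemma: SRE-to-automata}), and the substantive inclusion by state elimination on a locally sub-stochastic WFA, using sub-stochasticity to ensure every self-loop yields a well-defined discounted star (the paper's Lemma~\ref{lemma: automata-to-SRE}). The invariants and the $\varepsilon$-handling you flag as risk points are exactly the details the paper's sketch also leaves implicit.
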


The proof follows by constructing equivalences between locally sub-stochastic weighted automata and stochastic regular expressions (SREs). It parallels the construction between nondeterministic automata and regular expressions.

\begin{lemma}\label{lemma: automata-to-SRE}
Every language defined by a 
locally sub-stochastic weighted automaton can be represented by a Stochastic Regular 
Expression (SRE).
\end{lemma}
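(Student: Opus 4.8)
The plan is to mimic the classical state-elimination (Brzozowski–McCluskey) conversion from finite automata to regular expressions, but carried out over the "probabilistic semiring" in which the three SRE operators play the roles of $+$, $\cdot$, and ${}^*$. Let $\mathcal{W} = (\Sigma, Q, \lambda, \{M_\sigma\}, \mu)$ be a locally sub-stochastic WFA with $\semantics{\mathcal{W}}$ a stochastic language. First I would normalise the automaton to a convenient shape: add a fresh source state $s$ and sink state $t$, with a weighted $\varepsilon$-transition of weight $\lambda_q$ from $s$ to each $q$, and weight $\mu_q$ from each $q$ to $t$; the mass condition $\semantics{\mathcal{W}}(\Sigma^*)=1$ guarantees the total outgoing weight from $s$ is $1$ and local sub-stochasticity is preserved at all old states. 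Generalise edge labels from weighted letters to arbitrary SREs (a letter $\sigma$ with weight $w$ becomes the subprobability expression "$w\,\delta_\sigma$", which we encode as a formal pair; pure weights and $\varepsilon$ are handled by carrying a scalar alongside each SRE). Then eliminate the states of $Q$ one at a time: when removing state $q$ with self-loop labelled $R$, in-edges $P_i$ from $p_i$ and out-edges $S_j$ to $o_j$, replace them by new edges $p_i \to o_j$ labelled $P_i \cdot R^{*} \cdot S_j$ combined (via $+$) with any existing $p_i \to o_j$ edge. At the end only $s \to t$ remains, and its label is the desired expression.

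The key technical point — and the main obstacle — is that the SRE grammar does not contain raw scalar multiplication, unweighted concatenation of arbitrary series, or a free additive monoid: the only combinators are the convex combination $\alpha r_1 + (1-\alpha)r_2$ (which forces the two coefficients to sum to $1$), the Cauchy product (which multiplies total masses), and the discounted star $r^*_\alpha$ (whose expansion $\sum_k \alpha(1-\alpha)^{k-1} r^k$ bakes in a specific geometric weighting). So the elimination invariant cannot simply be "each edge carries an SRE"; it must be "each edge carries a pair $(c, r)$ where $c \ge 0$ is a scalar and $r$ is an SRE denoting a genuine stochastic language, and the edge's weight function is $c \cdot \semantics{r}$", together with a global invariant that for every surviving state the sum of the scalars $c$ on its outgoing edges is $\le 1$ (and $=1$ at $s$). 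I then need to check that each elimination step preserves this: the "sum $P_i R^* S_j$ over $j$" is a convex/sub-convex combination realisable by nested binary $\alpha r_1+(1-\alpha)r_2$ operators (renormalising the scalars to probabilities, using local sub-stochasticity to ensure the normalisers are $\le 1$ so the leftover mass can be parked on a "dead" edge to $t$ with scalar $1-\sum c$); the self-loop contributes a factor that, after pulling out its scalar $c_R$, becomes exactly a discounted star $r_R^{*}{}_{\alpha}$ with $\alpha = 1-c_R \in (0,1)$ precisely because sub-stochasticity gives $c_R < 1$; and concatenating $(c,r)$ pairs multiplies scalars and takes Cauchy products. Degenerate cases ($c_R = 0$, i.e. no loop; empty in- or out-edge sets; $c=0$ edges) are handled by dropping the star or the edge.

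To make the induction self-contained I would (i) prove a small "realisation lemma": any sub-probability vector $(c_1,\dots,c_m)$ with $\sum c_i \le 1$ together with stochastic languages $r_1,\dots,r_m$ can be combined into a single $(\,1,\,r\,)$ with $\semantics{r} = \sum_i c_i \semantics{r_i} + (1-\sum_i c_i)\,\semantics{r_{\text{dead}}}$ by iterated binary convex combination, where $r_{\text{dead}}$ is a fixed sink expression (e.g. $\delta_\sigma$ for an arbitrary $\sigma$, whose mass is absorbed as "rejected"); (ii) prove the "loop lemma": if an edge has scalar-SRE pair $(c_R, r_R)$ with $c_R<1$ then the Kleene closure of the loop is $\bigl(1,\, r_R^{*}{}_{1-c_R}\bigr)$ up to the convention $r^0 = \varepsilon$ — here one must be careful that the paper's discounted star starts the sum at $k=1$ and sets $r(\varepsilon)=0$, so the "zero-iteration" term has to be kept as a separate scalar branch rather than folded into the star; (iii) run the state-elimination with the combined invariant, and finally (iv) read off the label of the single $s\to t$ edge, which by the invariant has scalar $1$ and denotes a stochastic language equal to $\semantics{\mathcal{W}}$ by the standard path-algebra argument (each surviving edge label always equals the sum over eliminated-through paths of their weights). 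The subtle accounting in step (ii) around the $k=1$ indexing and the empty word, and keeping the "leftover mass" bookkeeping consistent through every elimination so that the final scalar is exactly $1$, is where I expect the real work to lie; everything else is a routine transcription of Brzozowski–McCluskey into this algebra.
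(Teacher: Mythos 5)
Your proposal follows essentially the same route as the paper's proof: Brzozowski--McCluskey state elimination with a fresh source and sink connected by weighted $\varepsilon$-transitions, convex combination for parallel edges, Cauchy product for concatenation, and the discounted star with parameter $1-c_R$ for a self-loop of total weight $c_R$, where local sub-stochasticity guarantees $c_R<1$ so the parameter lies in $(0,1)$. Your scalar--SRE pair invariant is in fact a more careful treatment of the normalisation bookkeeping than the paper's sketch (which writes sub-probability expressions such as $T(q_i,\sigma,q_k)\,\delta_\sigma$ directly); the only wobble is the ``dead edge to $t$'' device in step (i), which should be dropped --- routing the leftover mass $1-\sum_i c_i$ to the sink would add spurious mass to the computed language, and it is unnecessary because that deficit is already recorded in the scalar component of the pair $\bigl(\sum_i c_i,\,r\bigr)$ after renormalising the coefficients to a genuine convex combination.
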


\begin{proof}[Sketch]
We adapt the standard state-elimination construction, exploiting that local sub-stochasticity ensures all geometric 
series that appear during elimination converge.

Given a normalized weighted automaton 
$\mathcal{W} = (\Sigma,Q,\lambda,T,\mu)$, augment it with a new start state $q_s$ and final 
state $q_f$, connected by weighted $\varepsilon$-transitions according to $\lambda$ and~$\mu$. 
Then iteratively eliminate all intermediate states $q_k \in Q$, replacing paths 
$q_i \to q_j$ that pass through~$q_k$ by a single edge labelled with the expression
\[
R_{ij} = T(q_i,\sigma,q_k)\, \delta_\sigma \cdot 
\Bigl(\sum_{\sigma} \tfrac{T(q_k,\sigma,q_k)}{W_k}\delta_\sigma\Bigr)^*_{1-T(q_k,\Sigma,q_k)}
\cdot R_{kj},
\]
where $W_k$ is the total outgoing weight from~$q_k$.  
Because $T(q_k,\Sigma,q_k)<1$, every such geometric expansion is well defined.  
After all eliminations, the remaining edge $q_s \to q_f$ is labeled with an SRE~$r$ 
satisfying $\llbracket r \rrbracket = \llbracket \mathcal{W} \rrbracket$. \qed
\end{proof}

Note that the size of the SRE can be exponential in the size of the weighted automaton~\cite{Handbook-weighted-automata}. Conversely:

\begin{lemma}
\label{lemma: SRE-to-automata}
Every SRE $r$ defines a stochastic language that can be realized by a 
locally \emph{stochastic weighted automaton}, that is, for every state, the sum of weights of outgoing transitions is exactly $1$. \qed
\end{lemma}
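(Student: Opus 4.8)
The plan is to proceed by structural induction on the SRE $r$, building for each constructor a locally stochastic weighted automaton (every state has outgoing transition weights summing to exactly $1$) whose semantics is the prescribed stochastic language. Throughout, I will maintain a stronger invariant than mere local stochasticity: the automaton has a single initial state with initial weight $1$, and I will track where the ``remaining'' probability mass flows so that the construction composes. One natural way to make the Kleene star converge while keeping weights exactly summing to $1$ is to let each state carry both the ``continue'' transitions and a ``terminate'' transition (the final weight) whose weights jointly sum to $1$ — this is precisely the sub-stochastic WFA viewpoint of Figure~\ref{fig:running-example-wa2}, now tightened so the deficit is exactly absorbed by the final weight at each state.

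The base case $\delta_\sigma$ is a two-state automaton $q_0 \xrightarrow{\sigma} q_1$ with initial weight $1$ at $q_0$, transition weight $1$ on $\sigma$, and final weight $1$ at $q_1$; every state sums to $1$. For the convex combination $\alpha r_1 + (1-\alpha)r_2$, I take the automata $\mathcal{W}_1,\mathcal{W}_2$ obtained inductively and form a new initial state $q_s$ with final weight $0$ whose outgoing weighted $\varepsilon$-transitions (or, after standard $\varepsilon$-elimination, direct transitions into the first real moves) carry mass $\alpha$ into $\mathcal{W}_1$ and $1-\alpha$ into $\mathcal{W}_2$; since $\alpha + (1-\alpha) = 1$, local stochasticity at $q_s$ holds, and it is inherited elsewhere. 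For the Cauchy product $r_1 \cdot r_2$, I redirect the final weight of each state of $\mathcal{W}_1$: instead of terminating with weight $\mu_1(q)$, that state now passes weight $\mu_1(q)$ into the initial configuration of $\mathcal{W}_2$; because $\mathcal{W}_2$ has initial weight $1$ concentrated at one state, this rerouting preserves the row sums exactly, and the resulting semantics is $\sum_{w=uv}\semantics{\mathcal{W}_1}(u)\,\semantics{\mathcal{W}_2}(v)$ by the standard path-concatenation argument. For the discounted star $r_1{}^*_\alpha$, I again reroute: each final-weight edge of $\mathcal{W}_1$ (formerly mass $\mu_1(q)$) splits into mass $\alpha\,\mu_1(q)$ to a genuine final state and $(1-\alpha)\,\mu_1(q)$ back to a fresh copy of $\mathcal{W}_1$'s initial configuration; since $\alpha + (1-\alpha) = 1$ the row sums are untouched, one reads off that the $k$-fold iteration is weighted by $\alpha(1-\alpha)^{k-1}$, matching Definition~\ref{def:operators}, and convergence is automatic because each iteration is discounted by a factor strictly below $1$.

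The main obstacle is handling the weighted $\varepsilon$-transitions introduced by the compositions — in particular making sure that eliminating them (to obtain a genuine WFA with moves only on letters of $\Sigma$) does not create a state whose outgoing weights sum to strictly less than $1$ or introduce divergent geometric sums. This is where one must use that $r_1$ in the star has no $\varepsilon$-behaviour of its own, i.e.\ $\semantics{r_1}(\varepsilon) = 0$ (the convention flagged after Definition~\ref{def:operators}): since every substring $w_i$ in the star is forced to lie in $\Sigma^+$, there is no $\varepsilon$-cycle through the fresh initial copy, so $\varepsilon$-elimination is a finite process and the discounting factor $(1-\alpha)$ is never cancelled by a self-loop of $\varepsilon$'s. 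Once $\varepsilon$-elimination is justified, a short bookkeeping check confirms the composed row sums remain exactly $1$, and the semantics identities follow from the already-standard correspondence between path weights in a WFA and the defined operators. A secondary, purely technical point is that local stochasticity must be checked to be preserved under the ``redirect the final weight'' operation; this holds because we only ever replace an edge of weight $\mu(q)$ by edges of total weight $\mu(q)$, so every affected row sum is invariant.
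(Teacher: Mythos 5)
Your proposal is correct and follows essentially the same route as the paper's proof: structural induction with the identical base case, initial-weight splitting for convex combination, $\varepsilon$-edges from final weights into $\mathcal{W}_2$ for the Cauchy product, and the $\alpha$ / $(1-\alpha)$ split of final mass for the discounted star. The only difference is that you spell out the $\varepsilon$-elimination bookkeeping and the role of $\semantics{r_1}(\varepsilon)=0$ in ruling out $\varepsilon$-cycles, details the paper's sketch leaves implicit.
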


\begin{proof}
The proof proceeds by structural induction on $r$.
For $r = \delta_\sigma$, define the two-state automaton 
$\calA_\sigma = (\Sigma,\{q_i,q_f\},\lambda,T,\mu)$ with 
$\lambda(q_i)=1$, $T(q_i,\sigma,q_f)=1$, and $\mu(q_f)=1$. 
Clearly $\sum_w \llbracket \calA_\sigma \rrbracket(w)=1$. For the operators:

\begin{enumerate}
    \item (Convex Combination) For $r = \alpha r_1 + (1-\alpha)r_2$, given automata $\calA_1,\calA_2$ for $r_1,r_2$ on disjoint 
state sets, scale the initial weights of $\calA_1$ and $\calA_2$ by $\alpha$ and $(1-\alpha)$, 
respectively. Linearity of semantics yields 
$\sum_w \llbracket \calA \rrbracket(w)=1$.
\item (Cauchy Product) For $r = r_1 \cdot r_2$, connect every final state of $\calA_1$ to every initial state of 
$\calA_2$ by $\varepsilon$-transitions weighted by $\mu_1(p)\lambda_2(q)$.  
This realizes the Cauchy product of distributions and preserves total mass.
\item (Discounted Kleene Star) 
For $r = (r_1)^*_\alpha$, introduce 
discount factor $\alpha \in (0,1)$.  
Let $\calA_1$ realize $r_1$.  
Each repetition contributes a multiplicative $(1-\alpha)$ factor, and termination 
contributes $\alpha$. Thus
\[
\sum_{w\in \Sigma^+} \llbracket \calA \rrbracket(w)
= \sum_{k\ge1} \alpha (1-\alpha)^{k-1}
  \Bigl(\sum_w \llbracket \calA_1 \rrbracket(w)\Bigr)^k = 1.
\]
\end{enumerate}

By construction, the total mass of the automaton is exactly $1$ and for every state, the sum of weights of outgoing transitions is exactly $1$. \qed
\end{proof}

Together, these two lemmas complete the proof for Theorem \ref{thm:kleene-characterisation}. Intuitively, the nested discounted stars correspond to loops in the automaton whose expected number of traversals follows a geometric law. For our running example, the stochastic regular expression is: 

\begin{equation*}
\left(\left(\frac{1}{4}a + \frac{1}{4} b + \frac{1}{4}c \left(\frac{2}{3}a + \frac{1}{3}b\right)^*_{\frac{1}{2}}c\right)^*_{\frac{2}{3}} (a)^*_{\frac{3}{4}}a\right)c\left(\frac{2}{3}a + \frac{1}{3}b\right)^*_{\frac{1}{2}}
\end{equation*}

In summary, Theorems~\ref{thm:local-substochastic} and \ref{thm:kleene-characterisation} provide a local characterisation of probabilistic behaviour while preserving global convergence semantics. 

\section{Approximation and Statistical Inference for Stochastic Languages}
\label{sec:approx-learn}

In this section, we move beyond exact
characterization to the \emph{quantitative} analysis of stochastic languages:
their approximation, sampling, and learnability. We also introduce a minimal
testing framework that connects these ideas to classical distribution testing
in statistics.

\paragraph{Universal Approximation.}
We begin by showing that stochastic regular languages ($\Srat$) are dense
in the space of all stochastic languages ($\Stoch(\Sigma)$) under the $\ell_1$ (total variation) metric. This follows
from the simple observation that any distribution over $\Sigma^+$ can be
approximated arbitrarily well by truncating its support.

\begin{theorem}[Universal Approximation]
\label{thm:universal-sup-approx}
For every $f \in \Stoch(\Sigma^*)$ and $\varepsilon > 0$, there exists
$r \in \Srat$ such that $\| f - r \|_1 < \varepsilon$.
\end{theorem}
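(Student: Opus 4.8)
The plan is to exploit the fact that a stochastic language has total mass exactly $1$, so only finitely much mass can sit on long words. Truncating the support at a suitable length $N$ and reassigning the small leftover mass to one fixed word yields a finitely supported probability distribution that is simultaneously $\ell_1$-close to $f$ and trivially rational stochastic.

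First I would fix the truncation threshold. Since $\sum_{w \in \Sigma^*} f(w) = 1$ converges, the tail $\sum_{|w| > N} f(w)$ tends to $0$ as $N \to \infty$, so I can choose $N$ with $\eta := \sum_{|w| > N} f(w) < \varepsilon/2$. Next I would define the approximant $r$: put $r(w) = f(w)$ for every $w$ with $|w| \le N$, except on one fixed word $w_0$ (say $w_0 = \varepsilon$) where I add the deficit, $r(w_0) = f(w_0) + \eta$; and $r(w) = 0$ for $|w| > N$. By construction $r$ is non-negative, has finite support contained in $\{w : |w| \le N\}$, and total mass $(1-\eta) + \eta = 1$, so $r \in \Stoch(\Sigma^*)$. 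The distance estimate is then immediate: $f$ and $r$ differ only at $w_0$ (by $\eta$) and on the truncated tail (of total mass $\eta$), hence $\| f - r \|_1 = 2\eta < \varepsilon$.

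It remains to argue $r \in \Srat$. Being a finitely supported distribution, $r = \sum_i p_i \delta_{w_i}$ is a finite convex combination of Dirac distributions with $p_i \ge 0$ and $\sum_i p_i = 1$; after discarding the zero-weight terms this is an iterated binary convex combination with all parameters in $(0,1)$. Each $\delta_{w_i}$ with $w_i = \sigma_1 \cdots \sigma_k$ nonempty is the Cauchy product $\delta_{\sigma_1} \cdots \delta_{\sigma_k}$ of symbol-Diracs, while $\delta_\varepsilon$ is realised by the one-state weighted automaton over $\Rp$ with initial and final weight $1$ and no transitions; hence every $\delta_{w_i} \in \Srat$. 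Since $\Srat$ contains the Diracs and is closed under convex combinations (Theorem~\ref{thm:kleene-characterisation}, or directly via the disjoint-union-and-rescale construction underlying it), we conclude $r \in \Srat$, completing the argument.

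I do not expect a serious obstacle here; the only points needing a line of care are (i) verifying that reassigning the tail mass to $w_0$ keeps $r$ a genuine probability distribution and does not inflate the $\ell_1$ error beyond $2\eta$, and (ii) confirming that every finitely supported distribution — including the degenerate case $f = \delta_\varepsilon$ — lies in $\Srat$, which is where one appeals to the Kleene–Schützenberger characterisation or to an explicit small-automaton construction. If a quantitative bound is wanted, the same construction yields $N = O(\log(1/\varepsilon))$ whenever the tails of $f$ decay geometrically.
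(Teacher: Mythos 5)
Your proposal is correct and follows essentially the same route as the paper: truncate $f$ to a finite support capturing all but $\varepsilon/2$ of the mass, repair the deficit to get a genuine finite-support distribution, and observe that finite-support distributions lie in $\Srat$. The only (immaterial) difference is that you restore the missing mass by adding it to a single word, whereas the paper renormalizes the restriction by dividing by $f(S)$; both give $\ell_1$ error $2\eta < \varepsilon$, and your explicit justification that finitely supported distributions are convex combinations of Cauchy products of symbol-Diracs is a welcome elaboration of the step the paper leaves implicit.
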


\begin{proof}
Since $f$ is a probability distribution over the countable set $\Sigma^*$,
choose a finite $S \subseteq \Sigma^+$ such that $\sum_{w \in S} f(w) >
1-\varepsilon/2$. Define $r_S$ as the normalized restriction of $f$ to $S$:
\[
r_S(w) =
\begin{cases}
\dfrac{f(w)}{f(S)} & \text{if } w \in S,\\[4pt]
0 & \text{otherwise.}
\end{cases}
\]
Then $r_S$ is a stochastic language with finite support and therefore
$r_S \in \Srat$. Moreover,
$\|f - r_S\|_1 = 2(1 - f(S)) < \varepsilon$. \qed
\end{proof}

Hence $\Srat$ is dense in $\Stoch(\Sigma)$ under $\ell_1$ (and therefore also
under $\ell_\infty$). Intuitively, the convex hull of Dirac distributions
$\{\delta_w \mid w \in \Sigma^+\}$ suffices to approximate any stochastic
language arbitrarily closely. This motivates a quantitative analogue of
classical automata minimization.

\begin{problem}[Minimal-State $\varepsilon$-Approximation]
\label{prob:min-k}
Given $r \in \Stoch(\Sigma^*)$ and $\varepsilon > 0$, determine the minimal
$k \in \mathbb{N}$ for which there exists a $k$-state weighted automaton
$\mathcal{A}$ satisfying
$\| r - \llbracket \mathcal{A} \rrbracket \|_1 \le \varepsilon$.
\end{problem}

When $\varepsilon = 0$ and $r$ is rational, the minimal $k$ coincides with the
Hankel rank of $r$, computable via standard spectral methods
(e.g.~\cite{Carlyle1971Weighted,Schutzenberger1961Hankel,Denis2009LearningWA}).
For $\varepsilon > 0$, Problem~\ref{prob:min-k} becomes that of computing an
\emph{approximate Hankel rank}, a continuous relaxation of minimality that
remains open in general. 

\paragraph{Sampling from Stochastic Regular Expressions.}
Stochastic regular expressions (SREs) admit exact generative sampling
procedures aligned with their semantics. Sampling proceeds recursively as
follows:
\begin{enumerate}[label=(\alph*), leftmargin=*]
  \item Atomic symbol: For $r = \delta_\sigma$, output $\sigma$.
  \item Convex combination: For
        $r = \alpha r_1 + (1-\alpha)r_2$, sample
        $b \sim \mathrm{Bernoulli}(\alpha)$ and recurse into $r_b$.
  \item Concatenation: For $r = r_1 \cdot r_2$, sample
        $w_1 \sim r_1$ and $w_2 \sim r_2$, and output $w_1 w_2$.
  \item Discounted star: For $r = r^*_\alpha$, draw
        $k \sim \mathrm{Geom}(\alpha)$ and concatenate
        $w_1, \ldots, w_k$ sampled independently from $r$.
\end{enumerate}
The expected length of a sample from $r^*_\alpha$ is $(1-\alpha)/\alpha$, hence
finite for $\alpha > 0$. Each operator preserves independence and normalization,
ensuring that the sampling process exactly realizes
$\llbracket r \rrbracket$.

\paragraph{Learnability.}
The density result of Theorem~\ref{thm:universal-sup-approx} implies that
stochastic languages can be approximated by finite-parameter models, suggesting
learnability from data. Two settings naturally arise: 

\begin{enumerate}
    \item Given i.i.d.\ samples from an unknown $r \in \Stoch(\Sigma^*)$, one can fit a weighted automaton by empirical estimation of Hankel moments, following
the spectral learning paradigm for rational series
\cite{Balle2015LearningWA,Denis2009LearningWA}.
Sample complexity bounds follow from standard concentration inequalities for
low-rank moment matrices.
\item If an oracle provides evaluation access to $\llbracket r \rrbracket(w)$ for
queries $w \in \Sigma^*$, the parameters of a linear CRA or weighted automaton
can be recovered by solving a finite system of linear equations~\cite{Angluin1987LStar}. Under regularity assumptions, this system is solvable in
polynomial time.
\end{enumerate}

\paragraph{Distribution Testing.}
Finally, we consider the problem of distribution testing:

\begin{problem}[Distribution Testing]
\label{prob:identity}
Given $Q \in \Srat$ and sample access to $P \in \Stoch(\Sigma)$, design a
randomized algorithm that, for tolerance $\varepsilon > 0$ and confidence
$\delta \in (0,1)$, satisfies:
\[
\begin{split}
\text{Accept}, \quad & \|P-Q\|_1 < \varepsilon,\\
\text{Reject},\quad & \|P-Q\|_1 > \varepsilon,
\end{split}
\]
with probability at least $1-\delta$.
\end{problem}

This extends the classical problem on finite domains \cite{Batu2013Testing,Canonne2022Price}. A naive solution proceeds by a transforming it to the finite-support case by
truncating the distribution tail. Let $\theta \in \mathbb{N}$ be the smallest
length such that $\sum_{w \in \Sigma^{\le \theta}} Q(w) > 1-\varepsilon/3$.
Construct the restricted distributions $P_{\le\theta}$ and $Q_{\le\theta}$ on
$\Sigma^{\le \theta}$ and apply any finite-domain tolerant tester
(e.g.~\cite{Canonne2022Price}) with thresholds $(\varepsilon/3, \varepsilon)$.
This yields correctness guarantees identical to the finite setting, with
sample complexity $\widetilde{\Theta}(|\Sigma|^{\frac{\theta+1}{2}}/\varepsilon^2)$.

\section{Conclusion and Future Work}

In this work, we advance the theory of stochastic languages through the lens of sub-stochastic weighted automata, establishing fundamental computational boundaries for quantitative automata models. Our main contributions are:

\begin{enumerate}
    \item We show that determining whether a cost register automaton defines a stochastic language is undecidable in general, even for affine register updates (Theorem~\ref{thm:cra-undecidability}) and the fully polynomial fragment (Theorem~\ref{thm:poly-cra-undecidability}). This holds for finite mass, non-negativity, and exact normalization.
    
    \item By restricting to cost register automata over non-negative reals with affine updates, we recover decidability and efficient computation. For this fragment, expressively equivalent to weighted finite automata, the total mass is computable in polynomial time via spectral analysis (Theorem~\ref{thm:affine-decidable}).
    
    \item We establish that every stochastic weighted automaton admits a semantically equivalent \emph{sub-stochastic} representation via \emph{Perron-Frobenius normalization} (Theorem~\ref{thm:local-substochastic}). This local characterization mirrors classical probabilistic automata, enabling convergence verification syntactically, without global spectral analysis.
    
    \item We introduce \emph{stochastic regular expressions} (SREs) as a compositional, probabilistic analogue of regular expressions and provide a Kleene-Schützenberger characterization for rational stochastic languages (Theorem~\ref{thm:kleene-characterisation}).
\end{enumerate}

\subsection*{Open Problems and Future Directions}

Despite these advances, several fundamental questions remain open. We highlight three directions where sub-stochastic representations could play a decisive role:

\begin{enumerate}
    \item \emph{Decidability Beyond Linear Updates. (Problem \ref{prob:stochasticity})} Can syntactic conditions on non-linear cost register automata ensure stochasticity decidability? One can look at functions $f$ that satisfy $\|f(x)-f(y)\|\le L\|x-y\|$ with $L<1$. Banach’s fixed-point theorem guarantees convergence to a unique fixed point. For CRAs with contracting updates, we can explore if stochasticity is decidable and if spectral methods extend via linearization around fixed points? More generally, polynomial dynamical systems with well-understood asymptotics may admit decidable fragments. Investigating \emph{semantic classes} of well-behaved CRAs could significantly expand the applicability of sub-stochastic methods. Approximate notions of stochasticity, e.g., $\varepsilon$-closeness, may suffice for practical purposes and connect to numerical optimization.

    \item \emph{Complexity of Minimal-State $\varepsilon$-Approximation (Problem~\ref{prob:min-k}).} Determining the minimal number of states to $\varepsilon$-approximate a stochastic language is a central open question. This connects to low-rank matrix approximation and non-negative factorization. Approximate minimal-state representations determine sample complexity in learning stochastic languages and illuminate the intrinsic complexity of probabilistic sequential models.
    \item \emph{Distribution Testing (Problem~\ref{prob:identity}).} Efficient, instance-optimal testing algorithms are needed for stochastic languages, particularly for validating large language models. Some important questions are: Can we exploit local sub-stochastic properties of distributions for efficient testing? Can we use these methods to perform efficient black-box testing for language models? This problem bridges theory and practice, providing foundations for \emph{trustworthy AI} and enabling rigorous, sample-efficient validation of probabilistic generative models.
\end{enumerate}

In conclusion, our work establishes the connection between stochastic languages and their sub-stochastic representations. As probabilistic models become central to computing, formal methods to specify, analyse, and verify their distributional behaviour are increasingly urgent. Viewing stochastic languages through sub-stochastic models offers a principled foundation, laying the groundwork for a formal \emph{theory of probabilistic computation}.


\clearpage

\appendix

\end{document}